\documentclass[lettersize,journal]{IEEEtran}
\usepackage{amsmath,amssymb,amsfonts,amsthm}
\usepackage{algorithmic}
\usepackage{array}
\usepackage[caption=false,font=normalsize,labelfont=sf,textfont=sf]{subfig}
\usepackage{textcomp}
\usepackage{stfloats}
\usepackage{url}
\usepackage{verbatim}
\usepackage{graphicx}
\hyphenation{op-tical net-works semi-conduc-tor IEEE-Xplore}
\usepackage{cite}
\usepackage{xcolor}
\usepackage{diagbox}
\usepackage{slashbox}
\usepackage{diagbox}
\usepackage{siunitx}
\usepackage[colorlinks,
            linkcolor=blue,     
            anchorcolor=blue, 
            citecolor=blue,       
            ]{hyperref}
\usepackage[caption=false,font=normalsize,labelfont=sf,textfont=sf]{subfig}
\usepackage{multirow}
\usepackage{tikz}
\usepackage{tikz-3dplot}
\usepackage{svg}
\usepackage{overpic}
\usetikzlibrary {arrows.meta,bending}

\usepackage[linesnumbered,lined,boxed,commentsnumbered,ruled,vlined]{algorithm2e}

\usepackage{pgfplots}
\pgfplotsset{compat=newest}

\newtheorem{theorem}{Theorem}
\newtheorem{proposition}{Proposition}
\newtheorem{lemma}{Lemma}


\usepackage{xifthen}
\newcommand{\prob}[1][]{
\ifthenelse{\isempty{#1}}%
      {\ensuremath{P}}%
    {\ensuremath{P\left\(#1\right\)}}%
}
\newcommand{\vect}[1]{\boldsymbol{\mathrm{#1}}}
\newcommand{\mat}[1]{\boldsymbol{\mathrm{#1}}}

\newcommand{\tr}{\mathrm{tr}}
\newcommand{\E}{\mathbb{E}}
\newcommand{\diag}{\mathrm{diag}}

\newcommand{\norm}[1]{\left\lVert#1\right\rVert}

\usepackage[acronym]{glossaries}
\makeglossaries
\glsdisablehyper
\newacronym{ml}{ML}{maximum likelihood}
\newacronym{gps}{GPs}{Gaussian processes}
\newacronym{gp}{GP}{Gaussian process}
\newacronym{mlp}{MLP}{multilayer perceptrons}
\newacronym{vlp}{VLP}{visible light positioning}
\newacronym{led}{LED}{light-emitting diode}
\newacronym{mse}{MSE}{mean squared error}
\newacronym{crlb}{CRLB}{Cramer-Rao lower bound}
\newacronym{fim}{FIM}{Fisher Information matrix}
\newacronym{rss}{RSS}{received signal strength}
\newacronym{se}{SE}{squared error}
\newacronym{snr}{SNR}{Signal-to-Noise Ratio}
\newacronym{cdf}{CDF}{cumulative distribution function}
\newacronym{ls}{LS}{least squares}

\begin{document}

\title{Enhancing RSS-Based Visible Light Positioning by Optimal Calibrating the LED Tilt and  Gain}



\author{Fan Wu, Nobby Stevens, Lieven De Strycker and  François Rottenberg,~\IEEEmembership{Member,~IEEE,}
\thanks{The authors are with the ESAT-DRAMCO, KU Leuven (Ghent), 9000 Ghent, Belgium (e-mail: fan.wu@kuleuven.be).}

\thanks{ Acknowledge the Chinese Scholarship Council (CSC) for the Ph.D
grant of Fan Wu (No. 202106340043)

This work has been submitted to the IEEE for possible publication.  Copyright may be transferred without notice, after which this version may no longer be accessible.

}}

\markboth{Journal of \LaTeX\ Class Files,~Vol.~14, No.~8, August~2021}%
{Shell \MakeLowercase{\textit{et al.}}: A Sample Article Using IEEEtran.cls for IEEE Journals}


\maketitle

\begin{abstract}
This paper presents an optimal calibration scheme and a weighted \gls{ls} localization algorithm for \gls{rss} based  \gls{vlp} systems, focusing on the often-overlooked impact of \gls{led} tilt. By optimally  calibrating \gls{led} tilt and gain, we significantly enhance \gls{vlp} localization accuracy. 
Our algorithm outperforms both machine learning \gls{gps} and traditional multilateration techniques. Against \gls{gps}, it achieves improvements of 58\% and 74\% in the 50th and 99th percentiles, respectively. When compared to multilateration, it reduces the 50th percentile error from 7.4~cm to 3.2~cm and the 99th percentile error from 25.7~cm to 11~cm. 
We introduce a low-complexity estimator for tilt and gain that meets the \gls{crlb} for the \gls{mse}, emphasizing its precision and efficiency. Further, we elaborate on optimal calibration measurement placement and refine the observation model to include residual calibration errors, thereby improving localization performance. The weighted \gls{ls} algorithm's effectiveness is validated through simulations and real-world data, consistently outperforming \gls{gps} and multilateration,  across various training set sizes and reducing outlier errors. Our findings underscore the critical role of \gls{led} tilt calibration in advancing \gls{vlp} system accuracy and contribute to a more precise model for indoor positioning technologies.
\end{abstract}

\begin{IEEEkeywords}
Visible light positioning, localization, calibration, \gls{led} Tilt, Cramer-Rao lower bound.
\end{IEEEkeywords}

\glsreset{crlb}
\glsreset{ls}
\glsreset{mse}

\section{Introduction}
\IEEEPARstart{I}{n} recent years, the evolution of indoor positioning technologies has attracted substantial attention, particularly in overcoming the limitations of global positioning system in indoor environments \cite{shit2018location}. Among the myriad of emerging solutions, \gls{vlp} systems, primarily utilizing \gls{led}, have emerged as a promising solution \cite{chen2018research}.  Leveraging the widespread availability of LED-based lighting infrastructures, \gls{vlp} offers advantages such as high accuracy, cost-effectiveness, and minimal deployment complexity \cite{zhuang2018survey}. Employing techniques such as \gls{rss} \cite{zhuang2019low}, angle of arrival (AOA) \cite{yang2014three}, and time difference of arrival (TDOA) \cite{do2014tdoa}, \gls{vlp} has been shown to achieve remarkable precision in indoor localization. Among these techniques, \gls{rss} plays a crucial role, deducing location information from the intensity of light received at the photodiode (PD), thus offering a streamlined and cost-effective solution that eliminates the need for synchronized hardware components.

RSS-based indoor localization can be categorized into data-driven and model-based approaches.  Data-driven methods utilize machine learning algorithms like \gls{gps} \cite{knudde2020data,garbuglia2022bayesian,guan2021measuring}, \gls{mlp} \cite{arfaoui2021invoking,tran2022machine,guan2017high,yuan2018tilt}, and fingerprint techniques \cite{alam2018accurate,xu2023indoor,zhu2023centimeter}, require extensive data collection, including accurate PD coordinates and \gls{rss} values. However, they encounter challenges in terms of scalability and adaptability, especially when integrating new LEDs or adapting to changes in LED luminance \cite{royer2014lumen}.  Conversely, model-based approaches require a relatively small dataset to calibrate the attenuation model, which correlates RSS with the distances from the PD to LEDs, and employ multilateration \cite{kupper2005location} for estimating PD coordinates. While more streamlined in process, these methods are susceptible to model inaccuracies like LED or PD tilt and signal noise in line-of-sight (LOS) environments, which can compromise localization accuracy. A seemingly minor LED tilt of just \SI{3}{\degree}, for example, can lead to localization errors as large as \SI{30}{\cm} in our simulation  with  \SI{4}{\m} height of the LED. This underscores the critical importance of precise calibration and the consideration of tilt effects to maintain and enhance localization precision.

Extensive research has been devoted to advancing model-based localization in RSS-based \gls{vlp} systems. Studies like \cite{keskin2015comparative,steendam2016theoretical,zhang2014theoretical,amini2016theoretical} delve into the \gls{crlb} for distance estimation via TOA and RSS, while others \cite{zhang2014theoretical} offer theoretical insights into \gls{crlb} variations across different LED layouts. The development of custom-designed LED arrays is also explored \cite{cao2023led,zhou2019joint,hong2020angle,zhou2019performance,taparugssanagorn2014miso}, along with an examination of noise impacts on system performance \cite{amini2020theoretical}.  However, these approaches do not consider the effects of LED tilting. Studies \cite{stevens2020planar,stevens2018influence}  investigated the effects of tilt angles and deviations from the ideal Lambertian model of single LEDs on the path loss model, showing that even minor tilting can introduce single-LED distance estimation errors more than \SI{10}{\cm} \cite{stevens2019bias}. While these studies highlight the importance of accounting for LED tilt, they do not  propose new localization algorithms that explicitly consider this effect, nor do they identify the most efficient calibration scheme for practical use. This leaves room for more theoretical analysis and the development of better calibration scheme in VLP systems.

The key contributions of this paper, distinguishing it from prior studies, are:

\begin{enumerate}
    \item Development of a  calibration framework for LED tilt and  gain, backed by comprehensive theoretical analysis.
    \item A low complexity closed-form estimator for tilt and gain,  based on calibration measurements, which  achieves the \gls{crlb}, underscoring its efficiency.
    \item We demonstrate that the optimal location of calibration measurements can be done on a circle on the ground, uniformly distributed in angle and with a radius of approximately $0.55$ times the height of the LED. This not only improves the performance but also reduces the complexity of the tilt-gain estimator by simplifying the inversion process.
    \item We refine the observation model during the localization phase, taking into account residual calibration errors and formulate the   weighted \gls{ls} estimator. We also derive the \gls{crlb} on localization performance,  incorporating both additive noise and residual calibration noise.
    \item We extensively validate the benefits  of the proposed  weighted \gls{ls} technique both based on simulation and  experimental data, demonstrating substantial improvements in accuracy over \gls{gp} benchmarks and the traditional multilateration technique. Specifically, it reduced \gls{gp} errors from \SI{7.9}{\cm} to \SI{3.2}{\cm}  at P50 and from \SI{42.8}{\cm}  to \SI{11.1}{\cm}  at the P99, corresponding to improvement rates of 58\% and 74\%, respectively.
\end{enumerate}

The paper is structured as follows: Section~\ref{sec:SYSTEM DESCRIPTION} introduces the \gls{vlp} system. Section~\ref{sec:Calibration} presents the calibration model's closed-form solution and optimal calibration scheme. Section~\ref{sec:LOCALIZATION} links the calibration error with the   weighted \gls{ls} algorithm for improved localization. Section~\ref{sec:SIMULATION AND EXPERIMENTAL RESULTS} validates the model through simulations and practical application in a measurement dataset. The paper concludes with Section \ref{sec:CONCLUSION}.

\textit{Notations: }
Vectors and matrices are represented by bold lowercase and uppercase letters, respectively. The superscript~${}^T$ signifies transpose. The $\operatorname{tr}[\cdot]$, $\E[\cdot]$ and  $\Im(\cdot)$   indicate the trace, expectation and  imaginary  parts respectively. The notation $\mathcal{N}(\mu,\sigma^2)$ denotes a normal distribution  with mean $\mu$ and variance $\sigma^2$. The operator $\diag[\cdot]$ transforms a vector into a diagonal matrix, placing the elements of the vector along the matrix's diagonal. An identity matrix of size $K$ is $\mat{I}_K$. The operator $\mat{I}(\cdot)$ denotes the \gls{fim} of a vector, whose inverse yields the \gls{crlb}, represented by $\mat{C}(\cdot)$. The $(i,j)$-th element of a matrix $\mat{A}$ is indicated by $[\mat{A}]_{i,j}$. The notation $f(x)=O(g(x))$ as $x \rightarrow a$ indicates that positive numbers $\delta$ and $\lambda$ exist, ensuring $|f(x)| \leq \lambda g(x)$ for $0<|x-a|<\delta$.

\section{System Description}
\label{sec:SYSTEM DESCRIPTION}
In this study, we consider an RSS-based \gls{vlp} system that features multiple LEDs mounted on the ceiling, coupled with a receiver PD. As illustrated in Fig.~\ref{fig:led_tilt}, the calibration process is conducted individually for each LED. Therefore, without loss of generality and for the sake of clarity, we focus on a specific LED located at coordinates $\mathbf{r}_{S}$. This simplifies the notations of the calibration phase which is conducted independently per LED in Section~\ref{sec:Calibration}. In Section~\ref{sec:LOCALIZATION}, we consider localization based on all LEDs and we will generalize the notations by specifying  the LED index. To collect RSS values from each LED, a single PD is maneuvered to various locations $\mathbf{r}_{R,n}$, where $n=0,1,\ldots,N-1$, and $N$ denotes the total number of calibration points. These locations, $\vect{r}_{R,n}$, are assumed to be perfectly known during the calibration process.

In Fig.~\ref{fig:led_tilt}, $\mathbf{n}_{S}$ represents the unit normal vector (i.e., $\norm{\vect{n}_S}=1$) of the LED, while $\Delta \theta$ and $\Delta \gamma$ denote the  polar and azimuthal angles of $\mathbf{n}_{S}$, respectively.  We define $d_n$ as the distance between the PD and LED. This distance $d_n$ can be represented as the magnitude of the vector $\vect{d}_n=\mathbf{r}_{R, n}-\mathbf{r}_S$. Therefore, the $\cos (\phi_{S,n})$ and $\cos (\phi_{R,n})$ are defined as
\begin{equation}
\label{eq:phis}
\cos (\phi_{S,n}) = \mathbf{n}_{S}^T \cdot \frac{\mathbf{d}_n}{d_n}, \quad \cos (\phi_{R,n}) = \mathbf{n}_R^T \cdot \frac{-\mathbf{d}_n}{d_n}.
\end{equation}

\begin{figure}[!t]
    \centering
    \includesvg[width=0.6\linewidth]{Images/LED_tilt.svg}
    \caption{The coordinates of the transmitting  LED is $\mathbf{r}_{S}$, and the  receiving PD at the $n\text{-}th$ coordinates $\mathbf{r}_{R,n}$. (This picture is modified from \cite{stevens2020planar})}
    \label{fig:led_tilt}
\end{figure}

The RSS value $s_n$, received by the PD is
\begin{equation}
\begin{aligned}
\label{eq:channelmodel}
    s_n &= R_p P \frac{(m+1)A_r}{2 \pi d^2_n} \cos^{m} (\phi_{S,n}) \cos (\phi_{R,n}) + w_n.
\end{aligned}
\end{equation}

Here, $A_r$ represents the surface area of the PD,  $R_p$ is the PD's responsivity, $m$ is the Lambertian order of the LED, and $P$ is the transmitted optical power. The noise component $w_n$ is primarily composed of shot noise and Johnson noise. While shot noise originates from a Poisson process, it can be approximated as a Gaussian distribution for larger numbers, such as a large photon count, as suggested by \cite{554222}.  We thus model $w_n$ as an independently and identically distributed (i.i.d.) Gaussian noise, expressed as $w_n \sim \mathcal{N}(0, \sigma^2)$, where $\sigma^2$ is the noise variance.

\section{Calibration Methodology}
\label{sec:Calibration}

\subsection{Closed-Form Solution for Calibration}

During calibration, we position several points on the ground to facilitate measurements, ensuring that the PD remains horizontally aligned, with all samples set at a uniform height $h_n=h$. This method simplifies the calibration model, making $\cos( \phi_{R, n})$  equal to $h / d_n$, which not only streamlines the calibration process but also significantly improves measurement accuracy by providing a stable reference for calculation. A key objective of this calibration phase is to accurately estimate the tilt of LEDs mounted on the ceiling. Given the complexity of installing LEDs without any tilt due to their elevated positioning, our approach provides a practical solution by enabling the estimation and correction of any LED tilt through calibration.  This precision is critical, as even a minor tilt in the LEDs can dramatically affect the system's performance, leading to significant inaccuracies in localization. The process is refined by defining the gain as a constant scalar $c=$ $R_p P \frac{(m+1) A_r}{2 \pi}$, and adopting a Lambertian order of $m=1$ as recommended by existing literature \cite{raes2021usage,shi2019accuracy,li2024indoor,zhu2024efficient}. Under these assumptions,   (\ref{eq:channelmodel}) simplifies to
\begin{equation}
\label{eq:rangemodel}
    s_n = \frac{ch}{d_n^{4}} \mathbf{n}_S^T \vect{d}_n + w_n
\end{equation}
where $c$ is considered an unknown parameter to improve algorithm robustness, removing the necessity for exactly values of  $R_p$, $P$ and $A_r$.

Based on   (\ref{eq:rangemodel}), the calibration objective for $N$ observational data points is formulated as a \gls{ls} problem with unknown tilt $\vect{n}_S$ and gain $c$
\begin{equation}
\label{eq:obj1}
    \min\limits _{c, \mathbf{n}_S,\|\mathbf{n}_S\|=1} \frac{1}{2} \sum_{n=0}^{N-1}(s_n-\frac{ch}{d_n^{4}} \mathbf{n}_S^T \vect{d}_n)^2.
\end{equation}

The constraint $ \|\mathbf{n}_S  \|=1$  poses a challenge in terms of solvability. To solve it, we use a change of variable, converting the constrained optimization problem into a conventional linear least squares problem, with a simple closed-form solution.

Initially, we define the vector $\mathbf{g}_n = \frac{h}{d_n^4} \vect{d}_n \in \mathbb{R}^{3 \times 1}$, and denote
$$
\mathbf{s}= \left(\begin{array}{c}
s_0 \\
\vdots \\
s_{N-1}
\end{array}  \right) \in \mathbb{R}^{N \times 1}, \mathbf{G}= (\mathbf{g}_0, ..., \mathbf{g}_{N-1}) \in \mathbb{R}^{3 \times N}.
$$

Subsequently, we define $\mathbf{c} = c\mathbf{n}_S$. Hence   (\ref{eq:rangemodel}) becomes
\begin{equation}
\label{eq:newrangemodel}
\mathbf{s} = \mathbf{G}^T \mathbf{c} + \mathbf{w}
\end{equation}
where $\mathbf{w} = [w_0,.... w_{N-1}]^T$. The objective function (\ref{eq:obj1}) thus becomes
\begin{equation}
\label{eq:obj3}
\min _{\mathbf{c}} \frac{1}{2} \|\mathbf{s}-\mathbf{G}^T \mathbf{c}  \|^2.
\end{equation}

This formulation results in a conventional least squares problem, which is equivalent to the \gls{ml} estimator given that noise $\vect{w}$ is assumed i.i.d Gaussian. The solution can be derived using the pseudo-inverse \cite{kay1993fundamentals}
\begin{equation}
\label{eq:answer}
\hat{\mathbf{c}} =(\mathbf{G G}^T)^{-1} \mathbf{G} \mathbf{s} 
\end{equation}
and the unit normal vector and  gain can be retrieved easily as 
\begin{equation}
\label{eq:estimate_n_c}
    \hat{\vect{n}}_S =\frac{(\mat{G G}^T)^{-1} \mat{G} \mat{s}}{\norm{(\mat{G G}^T)^{-1} \mat{G} \mat{s}}}, 
\hat{c} =\norm{(\mat{G G}^T)^{-1} \mat{G} \mat{s}}.
\end{equation}

We can get an optimal and simple estimation result (\ref{eq:estimate_n_c}) when $m=1$. In the following section, the optimal calibration scheme is also based on $m=1$, but when $m$ differs, there might be a gap from optimality. However, the calibration scheme we propose can still serve as a useful guideline and the result can be calculated by numerical optimization methods.

After obtaining the estimated parameter $\hat{\vect{c}}$, the next step typically involves estimating the noise variance $\sigma^2$. The log-likelihood function is 
$$
\ln \mathcal{L}(\vect{s}; \vect{c}, \sigma^2) = -\frac{N}{2} \ln (2\pi \sigma^2) - \frac{1}{2\sigma^2} \norm{ \vect{s} - \mat{G}^T\vect{c}}^2.
$$

By differentiating this log-likelihood function with respect to $\sigma^2$ to find the \gls{ml} estimator of $\sigma^2$, we have
\begin{equation}
\label{eq:noise_estimate}
\hat{\sigma}^2  = \frac{1}{N} \norm{ \vect{s} - \mat{G}^T \vect{c}}^2
\end{equation}
where the $\vect{c}$ is given by its \gls{ml} estimate $\hat{\vect{c}}$.

\subsection{ Theoretical Analysis of Calibration Parameters}
To theoretically evaluate the precision of estimating parameters $\mathbf{c}$, $\mathbf{n}_S$ as well as $c$, it is essential to compute their respective \gls{mse} and we can also evaluate the \gls{crlb} to see their performance gap from optimality. The \gls{crlb} provides a lower bound on the variance of unbiased estimators, offering insight into the best possible accuracy achievable in parameter estimation.

\begin{proposition}
\label{proposition:crlb bold_c}
The estimator $\hat{\mathbf{c}}$ is unbiased and efficient, i.e., its \gls{mse} reaches the \gls{crlb}. Furthermore, the covariance matrix of the estimator is given by
$$
\E \left[(\vect{c}-\hat{\vect{c}})(\vect{c}-\hat{\vect{c}})^T \right] = \sigma^2 (\mat{GG}^T)^{-1}
$$
and the sum \gls{mse} is 
\begin{equation}
\label{eq:mini_summse}
    \E \left[\norm{\vect{c} - \hat{\vect{c}}}^2_2 \right]  = \sigma^2 \tr [(\mat{GG}^T)^{-1} ] .
\end{equation}

\end{proposition}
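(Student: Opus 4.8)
The plan is to recognize (\ref{eq:newrangemodel}) as the classical linear Gaussian model $\vect{s} = \mat{H}\vect{c} + \vect{w}$ with observation matrix $\mat{H} = \mat{G}^T$ and $\vect{w}\sim\mathcal{N}(\vect{0},\sigma^2\mat{I}_N)$, and then invoke the standard theory of the linear model (e.g., \cite{kay1993fundamentals}): for such a model the \gls{ls}/\gls{ml} estimator is the minimum-variance unbiased estimator and attains the \gls{crlb}. Concretely, I would first establish unbiasedness by a one-line computation, $\E[\hat{\vect{c}}] = (\mat{GG}^T)^{-1}\mat{G}\,\E[\vect{s}] = (\mat{GG}^T)^{-1}\mat{G}\mat{G}^T\vect{c} = \vect{c}$, using $\E[\vect{s}] = \mat{G}^T\vect{c}$ from (\ref{eq:newrangemodel}).

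Next I would compute the covariance directly. Writing the estimation error as $\hat{\vect{c}} - \vect{c} = (\mat{GG}^T)^{-1}\mat{G}\,\vect{w}$, linearity gives
$$
\E\!\left[(\vect{c}-\hat{\vect{c}})(\vect{c}-\hat{\vect{c}})^T\right] = (\mat{GG}^T)^{-1}\mat{G}\,\E[\vect{w}\vect{w}^T]\,\mat{G}^T(\mat{GG}^T)^{-1} = \sigma^2 (\mat{GG}^T)^{-1},
$$
since $\E[\vect{w}\vect{w}^T] = \sigma^2\mat{I}_N$ and $(\mat{GG}^T)^{-1}$ is symmetric. The sum \gls{mse} (\ref{eq:mini_summse}) then follows immediately by taking the trace of this covariance matrix and using linearity/cyclicity of the trace.

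To prove efficiency, I would compute the \gls{fim} for $\vect{c}$. For the Gaussian model with mean $\mat{G}^T\vect{c}$ and covariance $\sigma^2\mat{I}_N$, the standard formula yields $\mat{I}(\vect{c}) = \frac{1}{\sigma^2}(\mat{G}^T)^T(\mat{G}^T) = \frac{1}{\sigma^2}\mat{G}\mat{G}^T$, so the \gls{crlb} is $\mat{C}(\vect{c}) = \sigma^2(\mat{GG}^T)^{-1}$, which coincides exactly with the covariance computed above; hence $\hat{\vect{c}}$ is efficient. One point that deserves a remark rather than a difficulty: although $\sigma^2$ is unknown and jointly estimated (cf. (\ref{eq:noise_estimate})), the \gls{fim} of the joint parameter $(\vect{c},\sigma^2)$ is block-diagonal for the linear Gaussian model, so the bound on $\vect{c}$ is unaffected by not knowing $\sigma^2$. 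There is no real obstacle here—the result is a direct specialization of the linear-model theory—so the only care needed is in correctly identifying the observation matrix as $\mat{G}^T$ (not $\mat{G}$) and in noting the block-diagonal structure to justify treating $\sigma^2$ as a nuisance parameter.
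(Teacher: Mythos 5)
Your proposal is correct and follows essentially the same route as the paper's own proof: direct computation of the bias and covariance of the linear \gls{ls}/\gls{ml} estimator, then computation of the \gls{fim} $\frac{1}{\sigma^2}\mat{GG}^T$ for the Gaussian linear model and comparison with the covariance to conclude efficiency. Your added remark on the block-diagonal joint \gls{fim} of $(\vect{c},\sigma^2)$ is a valid refinement the paper does not spell out, but it is not needed for the stated result.
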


\begin{proof}
Please see Appendix.~\ref{ap:2}.
\end{proof}

\begin{figure}[!t]
    \centering
    \includegraphics[width=0.9\linewidth]{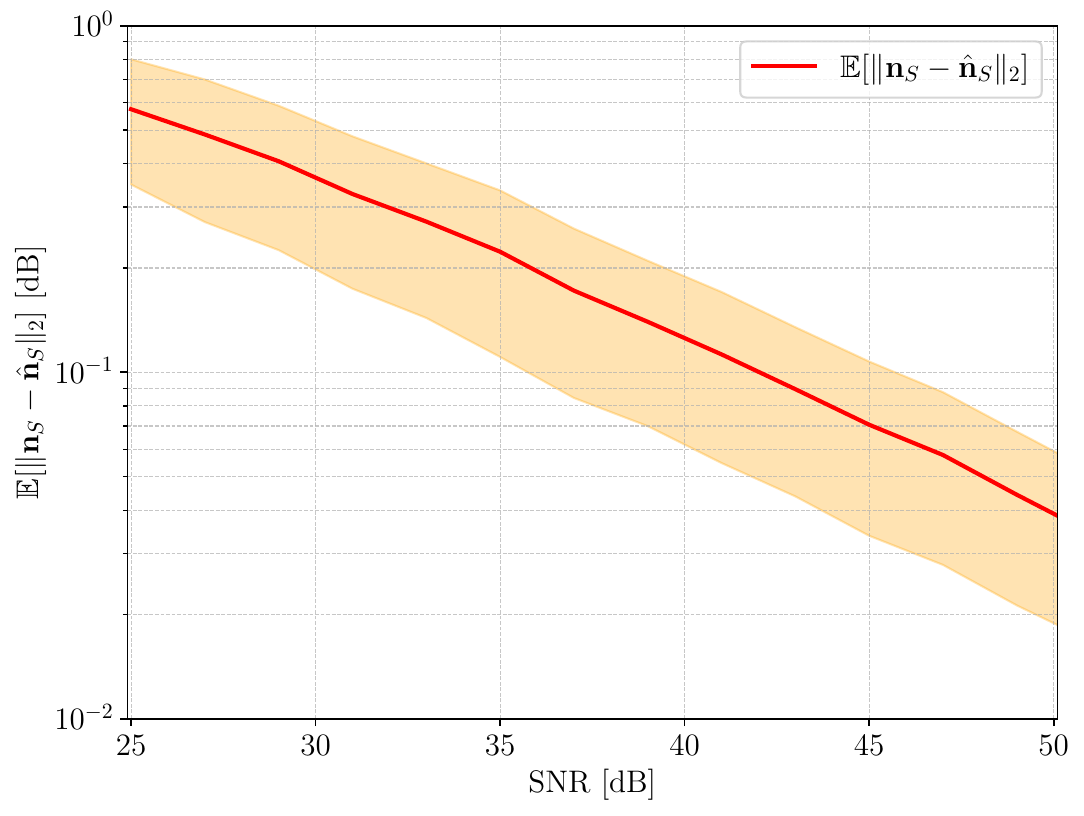}
    \caption{ Numerical evaluation of the expected bias norm $\E[\|\mathbf{n}_S - \hat{\mathbf{n}}_S \|_2]$ across varying SNR levels [\SI{}{\dB}]. The shaded area represents one standard deviation from the mean bias norm.}
    \label{fig:Bais_n_arrow}
\end{figure}

\begin{proposition}
\label{proposition:crlb bold_n}
Given the  estimator for the LED's normal vector $\hat{\mathbf{n}}_S$, as $\sigma^2 \rightarrow 0$, the expected bias norm is  $\E \left[\| \mathbf{n}_S-\hat{\mathbf{n}}_S \|_2 \right]$  asymptotically converges to 
\begin{equation}
    \label{eq:bias_norm}
 \E[ \|\mathbf{n}_S-\hat{\mathbf{n}}_S \|_2] = \frac{O(\sigma^2)}{c^2} \sqrt{ \mathbf{n}_S^T (\mat{G}\mat{G}^T)^{-2} \mathbf{n}_S  } .
\end{equation}
\end{proposition}

\begin{proof}
    Please see Appendix.~\ref{ap:4}.
\end{proof}

Fig.~\ref{fig:Bais_n_arrow} shows the simulation results for the expected bias norm $\E[\| \mathbf{n}_S - \hat{\mathbf{n}}_S \|_2 ]$, which demonstrates its asymptotic convergence as the noise variance $\sigma^2 \rightarrow 0$. In this simulation, the LED is mounted at the origin with a height of \SI{5.71}{\m} (same height as the factory environment in Section \ref{sec:factory_result}), and the LED tilt angles are set to $\Delta \theta = \SI{2.5}{\degree}$ and $\Delta \gamma = \SI{135}{\degree}$, 
with $c = 3$. The training datasets are uniformly sampled from a circle with a radius of \SI{2}{\m} around the LED. The LED transmitted optical power $P=1$. For each signal-to-Noise (SNR) level, 1000 simulation runs were performed. As shown in Fig.~\ref{fig:Bais_n_arrow}, the bias norm is inversely proportional to the SNR (i.e., proportional to $\sigma^2$).

\begin{figure}[!t]
    \centering
    \includegraphics[width=\linewidth]{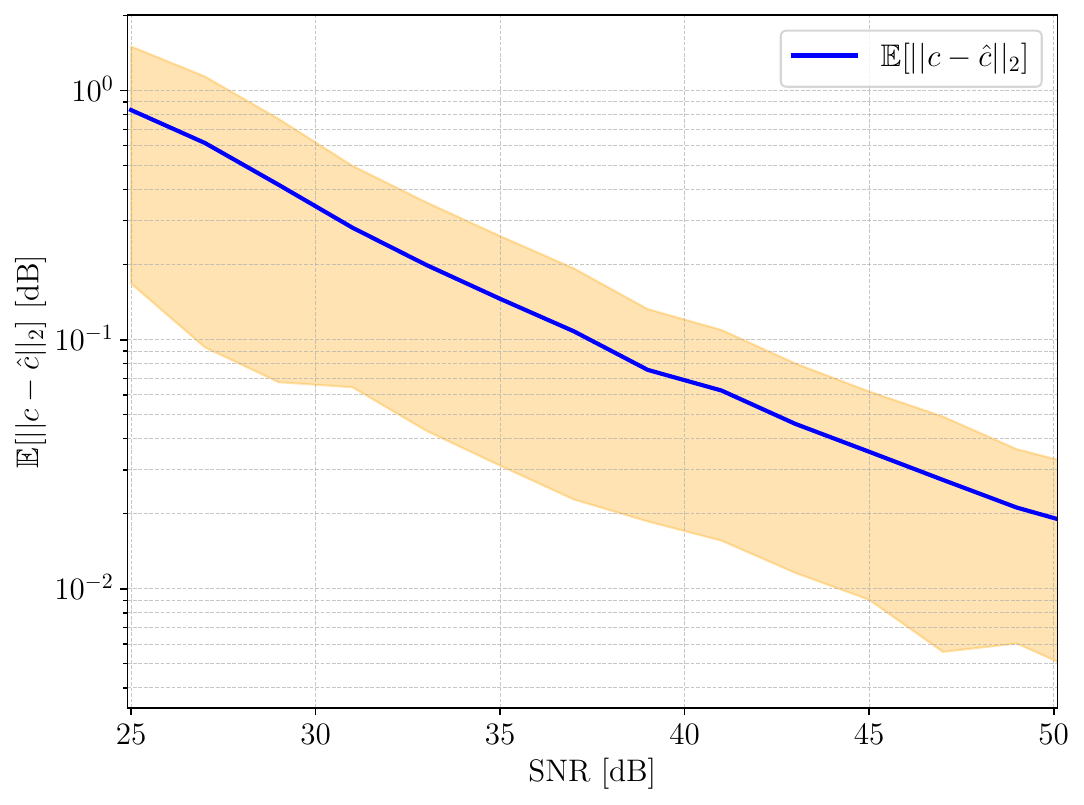}
    \caption{ Numerical evaluation of the expected bias norm $\E[|| {c - \hat{c}}||_2]$ across varying SNR levels [\SI{}{\dB}].}
    \label{fig:Bais_c_arrow}
\end{figure}

\begin{proposition}
\label{proposition:crlb c}
For the estimator of the scalar $\hat{c}$, as  $\sigma^2 \rightarrow 0$, the expected bias norm $\E[|| {c-\hat{c}} ||_2]$ asymptotically converges to
\begin{equation}
    \label{eq:bias_c}
\E[ || {c - \hat{c}} ||_2]  = O(\sigma^2).
\end{equation}
where $\vect{\nu} = [\nu_0,...,\nu_{n},...\nu_{N-1}]^T$  consists of elements $\nu_n$ that are i.i.d Gaussian unit distribution, and $\vect{w} = \sigma \vect{\nu}$.
\end{proposition}

\begin{proof}
    Please see Appendix.~\ref{ap:3}.
\end{proof}

Using the same configuration as in Fig.~\ref{fig:Bais_n_arrow},  The numerical results for the bias norm $\E[|| {c - \hat{c}} ||_2]$ are illustrated in Fig.~\ref{fig:Bais_c_arrow}. The bias norm is also inversely proportional to the SNR, that is, proportional to $\sigma^2$.

\subsection{Optimal Calibration Scheme}
The calibration accuracy depends on both the number of calibration points $N$, and their spatial distribution. Despite the precision gains from a higher $N$, exact coordinate measurement of the receiver is cost and time-consuming. Our strategy, therefore, focuses on optimizing the distribution of receiver positions for calibration at ground level. To achieve this, our goal is to minimize the sum \gls{mse} of the calibration parameters $\hat{\mathbf{c}}$, as formulated in   (\ref{eq:mini_summse}). We express this optimization goal as
\begin{equation}
\label{eq:optimaldataset}
   \min\limits _{\mat{G}} \sigma^2 \tr \left[ (\mat{GG}^T)^{-1} \right].
\end{equation}

\begin{theorem}
\label{theorem:optimal_calibration_dataset}
For $N \geq 3$, calibration points evenly distributed on a circle with radius $r^*=\sqrt{\frac{-3+\sqrt{13}}{2}} h \approx 0.55 h$ around the LED is optimal for LED calibration on the ground. The coordinates of these points can be represented as 
$$\vect{r}_{R,n} = [ r^* \cos \left(\frac{2 \pi}{N} n + \phi\right), r^* \sin \left(\frac{2 \pi}{N} n + \phi\right), 0]^T$$
where $n=0,...,N-1$ and $\phi$ is an arbitrary angle. Then the matrix $\mat{GG}^T$ will be simplified as
$$
\mathbf{G G}^T=\frac{ h^2}{\left(h^2+r^{*2}\right)^4} \operatorname{diag}\left[\frac{Nr^{*2}}{2}, \frac{Nr^{*2}}{2}, Nh^2\right].
$$

The minimum sum \gls{mse} is 
$$
\E\big[\norm{\vect{c} - \hat{\vect{c}}}^2\big] = \sigma^2  \frac{(h^2+ r^{*2})^4}{Nh^2}(\frac{4}{r^{*2}} + \frac{1}{h^2}) \approx \frac{ 40.94  \sigma^2 h^4 }{N}.
$$
\end{theorem}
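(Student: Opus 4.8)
The plan is to prove the formulas for the circular configuration by direct computation and then to establish a matching lower bound valid for every ground configuration. Throughout I would place the LED at $\vect{r}_S=(0,0,h)^T$ without loss of generality, so that a ground-level calibration point at planar radius $r$ and azimuth $\theta$ contributes the column $\vect{g}(r,\theta)=\frac{h}{(r^2+h^2)^2}(r\cos\theta,\,r\sin\theta,\,-h)^T$; it is convenient to abbreviate $a(r)=\frac{hr}{(r^2+h^2)^2}$ and $b(r)=\frac{h^2}{(r^2+h^2)^2}$ for the in-plane and vertical magnitudes. For the evenly spaced circle with $\alpha_n=\frac{2\pi n}{N}+\phi$ and $N\ge 3$, I would invoke the elementary sums $\sum_n\cos^2\alpha_n=\sum_n\sin^2\alpha_n=N/2$, $\sum_n\cos\alpha_n\sin\alpha_n=0$ and $\sum_n\cos\alpha_n=\sum_n\sin\alpha_n=0$ (the first pair is precisely where $N\ge 3$ is needed), which make all off-diagonal entries of $\mat{GG}^T$ vanish and give $\mat{GG}^T=\frac{h^2}{(r^2+h^2)^4}\diag[\frac{Nr^2}{2},\frac{Nr^2}{2},Nh^2]$ and hence $\tr[(\mat{GG}^T)^{-1}]=\frac{(r^2+h^2)^4}{Nh^2}\big(\frac{4}{r^2}+\frac{1}{h^2}\big)$, which is the sum \gls{mse} by Proposition~\ref{proposition:crlb bold_c}.

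For the lower bound I would argue as follows. Fix an arbitrary ground configuration, let $\mat{M}=\mat{GG}^T$ (if $\mat{M}$ is singular the \gls{mse} is infinite and there is nothing to prove), and let $\mat{R}_\psi$ denote rotation about the vertical axis. Because $X\mapsto\tr[X^{-1}]$ is convex on positive definite matrices and $\tr[(\mat{R}_\psi\mat{M}\mat{R}_\psi^T)^{-1}]=\tr[\mat{M}^{-1}]$, Jensen's inequality applied to the rotation average $\bar{\mat{M}}=\frac{1}{2\pi}\int_0^{2\pi}\mat{R}_\psi\mat{M}\mat{R}_\psi^T\,d\psi$ gives $\tr[\mat{M}^{-1}]\ge\tr[\bar{\mat{M}}^{-1}]$. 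Since $\bar{\mat{M}}$ is invariant under every $\mat{R}_\psi$ it must be diagonal of the form $\diag[\mu,\mu,\nu]$ with $\mu=\frac12\sum_n a(r_n)^2$ and $\nu=\sum_n b(r_n)^2$, so $\tr[\mat{M}^{-1}]\ge\frac{4}{\sum_n a(r_n)^2}+\frac{1}{\sum_n b(r_n)^2}$, a bound depending only on the radii $r_0,\dots,r_{N-1}$.

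Then I would minimize this radial bound and show the minimizer uses a common radius. Using the identity $a(r)^2=\frac{1}{h}b(r)^{3/2}-b(r)^2$, I would set $t_n=b(r_n)\in(0,1/h^2]$ and $s_n=t_n^2$, so $\sum_n b(r_n)^2=\sum_n s_n=:B$ while $\sum_n a(r_n)^2=\frac1h\sum_n s_n^{3/4}-B$; for fixed $B$ the bound decreases as $\sum_n s_n^{3/4}$ increases, and since $s\mapsto s^{3/4}$ is concave, Jensen shows $\sum_n s_n^{3/4}$ is maximized at $s_n=B/N$ for all $n$ (which is feasible because $s_n\le 1/h^4$ forces $B\le N/h^4$), i.e., all $r_n$ equal. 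For a common radius the bound equals $\frac1N\big(\frac{4}{a(r)^2}+\frac{1}{b(r)^2}\big)=\frac{(r^2+h^2)^4}{Nh^2}\big(\frac{4}{r^2}+\frac{1}{h^2}\big)$, which coincides with the circular value, so the circle attains the bound. Finally, differentiating this in $u=r^2$ reduces the stationarity condition to $u^2+3h^2u-h^4=0$, whose positive root $u=\frac{-3+\sqrt{13}}{2}h^2$ gives $r^*=\sqrt{\frac{-3+\sqrt{13}}{2}}\,h\approx 0.55h$; this critical point is the minimum since the objective diverges as $r\to 0^+$ and as $r\to\infty$, and substituting $r^*$ yields the stated $\mat{GG}^T$ and the value $\approx 40.94\,\sigma^2 h^4/N$. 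The step I expect to be the main obstacle is the optimality half: recognizing that rotation-averaging combined with the convexity of $\tr[\cdot^{-1}]$ collapses the matrix design problem to a one-parameter radial bound, and then dispatching the remaining nonconvex radial optimization through the $s_n=t_n^2$ substitution that renders it a concave maximization amenable to Jensen.
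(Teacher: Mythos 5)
Your proposal is correct, and the attainment half (the trigonometric sums giving the diagonal $\mat{GG}^T$ and the resulting sum \gls{mse}, with the correct observation that $N\geq 3$ is what kills the $\sum_n e^{2\jmath\alpha_n}$ terms) matches the paper's computation. The optimality half, however, takes a genuinely different route. The paper bounds $\tr[(\mat{GG}^T)^{-1}]\geq\sum_i 1/[\mat{GG}^T]_{i,i}$ via the element-wise inequality for positive definite matrices, then parametrizes the diagonal entries in polar coordinates and works through first-order conditions: stationarity in the angles forces the two in-plane sums to be equal, and stationarity in $\eta_n=r_n^2/h^2$ forces a common radius via the ``constant equals linear function of $\eta_m$'' argument, leading to the same quadratic $\eta^2+3\eta-1=0$. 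You instead obtain the reduction to a purely radial bound in one stroke by rotation-averaging about the vertical axis and invoking convexity of $X\mapsto\tr[X^{-1}]$ on the positive definite cone (Jensen), which immediately yields $\tr[\mat{M}^{-1}]\geq 4/\sum_n a(r_n)^2+1/\sum_n b(r_n)^2$; you then dispose of the equal-radius question with the identity $a(r)^2=\tfrac1h b(r)^{3/2}-b(r)^2$ and a concavity/Jensen argument, and finish with the same one-dimensional calculus ($u^2+3h^2u-h^4=0$, which checks out). What your approach buys is a cleaner global-optimality argument: the paper's lemmas rest on stationary-point analysis and an informal claim that the diagonal case is the right equality case, whereas your convexity and concavity steps certify the bound over all ground configurations without discussing second-order or boundary behavior beyond the divergence of the radial objective at $0$ and $\infty$. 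What the paper's route buys is elementarity: it needs only the scalar inequality $[(\mat{X})^{-1}]_{i,i}\geq 1/[\mat{X}]_{i,i}$ rather than convexity of the trace-inverse and an averaging argument over the rotation group. Both arrive at the same $r^*$, the same diagonal $\mat{GG}^T$, and the same $\approx 40.94\,\sigma^2h^4/N$.
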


\begin{proof}
    Please see Appendix.~\ref{ap:optimal_calibrationdata}.
\end{proof}

\begin{figure}[!t]
    \centering
    \tdplotsetmaincoords{70}{30}
    \begin{tikzpicture}[scale=2,tdplot_main_coords]
    \coordinate (O) at (0,0,0);
    \coordinate (LED) at (0,0,1.5);
   \coordinate (R0) at ({2*cos(0)},{2*sin(0)},0);
    \coordinate (R1) at ({2*cos(72)},{2*sin(72)},0);
    \coordinate (R2) at ({2*cos(144)},{2*sin(144)},0);
    \coordinate (R3) at ({2*cos(216)},{2*sin(216)},0);
    \coordinate (R4) at ({2*cos(288)},{2*sin(288)},0);

    \draw[fill=yellow] (LED) circle (0.1) node[above] {LED $\mathbf{r}_S = [0,0,h]^T$};

    \draw[thick,->] (-2,0,0) -- (2,0,0) node[anchor=north west]{$x$};
    \draw[thick,->] (0,-2,0) -- (0,2,0) node[anchor=south east]{$y$};
    \draw[thick,->] (O) -- (0,0,1.5) node[]{};
    
    \draw[thick,->,dashed] (O) -- ({2*cos(-135)},{2*sin(-135)},0) node[midway, sloped, above]{$r$};
    
    \tdplotdrawarc[tdplot_main_coords,gray]{(O)}{2}{0}{360}{}{}
    
    \node[label={[purple]above left:$\mathbf{r}_{R,0}$},circle,fill=red,inner sep=2pt] at (R0) {};
    \node[label={[purple]below left:$\mathbf{r}_{R,1}$},circle,fill=red,inner sep=2pt] at (R1) {};
    \node[label={[purple]below right:$\mathbf{r}_{R,2}$},circle,fill=red,inner sep=2pt] at (R2) {};
    \node[label={[purple]above right:$\mathbf{r}_{R,3}$},circle,fill=red,inner sep=2pt] at (R3) {};
    \node[label={[purple]above:$\mathbf{r}_{R,4}$},circle,fill=red,inner sep=2pt] at (R4) {};
    
    \end{tikzpicture}
\caption{Illustration of optimal calibration schemes with an LED at position $\mathbf{r}_S = [0,0,h]^T$ and five calibration points (red points) uniformly distributed on a circle of radius $r$ on the ground plane.}
\label{fig:cali_sche}
\end{figure}

\begin{figure}[!t]
    \centering
    \includesvg[width=1\linewidth]{Images/MSE_trace_0.001.svg}
    \caption{Variation of sum MSE of $\hat{\vect{c}}$ with changing radius $r$, demonstrating the adherence to the lower bound and optimal  at $r^*$.}
    \label{fig:optimal_calibration_result}
\end{figure}

Fig.~\ref{fig:cali_sche} depicts our calibration scheme. An LED is mounted on the ceiling with coordinates $\vect{r}_S=[0,0,h]^T$. The $5$ red points selected as calibration data are evenly distributed on a circle with radius $r$. In Fig.~\ref{fig:optimal_calibration_result},  as we vary  $r$ from $0.05h$ to $1.5h$, the sum MSE of $\hat{\vect{c}}$  first decreases, reaching its minimum around $0.55h$, and then increases. Notably, the sum MSE consistently aligns with the lower bound throughout this range. This observation confirms that our calibration scheme  always adheres to  the lower bound, and achieves the optimal performance when $r^*=\sqrt{\frac{-3+\sqrt{13}}{2}} h \approx 0.55 h$.

\section{Localization  Methodology}
\label{sec:LOCALIZATION}
Following calibration, the focus shifts to localization. For clarity, localization parameters will be denoted with a tilde notation. We also introduce the LED index $l$, where $l=0,1, \ldots, L-1$, with  $L$ representing the total number of LEDs. We employ $\tilde{s}_l$ to denote the \gls{rss} value at PD from the LED $l$ during localization, while  $\tilde{\vect{r}}_R$ and $\vect{r}_{S,l}$ denote the coordinates of the PD and the LED $l$, respectively. Furthermore, we define $\tilde{d}_l$ as the distance, expressed as $\tilde{d}_l = \norm{\tilde{\vect{r}}_R - \vect{r}_{S,l} }$, and introduce $\tilde{\vect{d}}_l=\tilde{\vect{r}}_R - \vect{r}_{S,l}$. 

We still assume that the receiver PD has no tilt and is located on the ground and that LED $l$ is located at a height $h_l$. From the calibration process, we obtain the estimated value $\hat{\vect{n}}_{S,l}$ and $\hat{c}_l$ for LED $l$. The RSS $\tilde{s}_l$ can be expressed as 
\begin{equation}
\label{eq:localization}
\tilde{s}_l=\frac{\hat{c}_l {h}_l}{\tilde{d}_l^4} \hat{\mathbf{n}}_{S, l}^T \tilde{\vect{d}}_l
 +e_l+\tilde{w}_l
\end{equation}
where $\tilde{w}_l \sim \mathcal{N}\left(0, \sigma_l^2\right)$ is the localization noise. Assuming the same variance $\sigma_l^2$ for both calibration noise and localization noise ($\tilde{w}_l$ and $w_l$) is practical due to the consistent noise characteristics of the same LED. In practical application, the variance $\sigma_l^2$ can be obtained  directly from the sensor's datasheet or estimated using   (\ref{eq:noise_estimate}) following calibration. The variable $ e_l$ is the error arising from calibration noise, defined as 
$$
e_l =\frac{c_l {h}_l}{\tilde{d}_l^4} \mathbf{n}_{S, l}^T \tilde{\vect{d}}_l-\frac{\hat{c}_l {h}_l}{\tilde{d}_l^4} \hat{\mathbf{n}}_{S, l}^T \tilde{\vect{d}}_l. 
$$

Employing   (\ref{eq:answer}) and   (\ref{eq:newrangemodel}), $e_l$ can be further simplified
$$
\begin{aligned}
e_l &= \frac{{h}_l}{\tilde{d}_l^4}(c_l \mathbf{n}_{S, l}-(\mathbf{G}_l \mathbf{G}_l^T)^{-1} \mathbf{G}_l \mathbf{s}_l)^T  \tilde{\vect{d}}_l  \\
&=-\frac{{h}_l}{\tilde{d}_l^4} \tilde{\vect{d}}_l^T (\mathbf{G}_l \mathbf{G}_l^T)^{-1} \mathbf{G}_l \mathbf{w}_l
\end{aligned}
$$
where $\vect{w}_l$ is a vector consisting of $N$ samples of $w_l$, representing the calibration noise specific to LED $l$. The $\mat{G}_l$ is structured similarly to $\mat{G}$ but includes the subscript $l$ to identify the specific LED during calibration.  Given the i.i.d zero mean nature of ${w}_l$, $e_l$ has zero mean and and variance given by  
\begin{equation}
\label{eq:E_e_square}
    \E\left[e_l^2\right]=\sigma_l^2 \frac{{h}_l^2}{\tilde{d}_l^8} \tilde{\vect{d}}_l^T \left(\mathbf{G}_l \mathbf{G}_l^T\right)^{-1} \tilde{\vect{d}}_l.
\end{equation}

Thus, equation (\ref{eq:localization}) can be rewritten as $\tilde{s}_l=\mu_l+n_l,$
where 
$$\mu_l=\frac{\hat{c}_l {h}_l}{\tilde{d}_l^4} \hat{\mathbf{n}}_{S, l}^T \tilde{\vect{d}}_l$$
and  the total noise $n_l$ is
$n_l=e_l + \tilde{w}_l$ with $\E [n_l]=0$, with variance
$$
\begin{aligned}
\E\left[n^2_l\right] &= \E\left[e_l^2\right]+\mathbb{E}\left[\tilde{w}_l^2\right] \\
&=\sigma_l^2\left(\frac{h_l^2}{\tilde{d}_l^8} \tilde{\vect{d}}_l^T (\mathbf{G}_l \mathbf{G}_l^T)^{-1} \tilde{\vect{d}}_l+1\right).
\end{aligned}
$$
As can be seen, the noise variance $\E [n_l^2]$ is LED dependent even if the noise variance $\sigma_l^2$  are equal for each LED, and the localization \gls{ml} problem becomes equivalent to a  weighted \gls{ls} problem
\begin{equation}
\label{eq:wml}
\min _{\tilde{\mathbf{r}}_R} \sum_{l=0}^{L-1} \frac{1}{\E \left[n_l^2\right]}\left(\tilde{s}_l-\frac{\hat{c}_l h_l}{\tilde{d}_l^4} \hat{\mathbf{n}}_{S, l}^T \tilde{\vect{d}}_l \right)^2 .
\end{equation}

Incorporating this understanding with our calibration process, we introduce a method utilizing   weighted \gls{ls}, as detailed in Algorithm~\ref{algo_LS}.

\begin{algorithm}[tb]
\SetKwData{Left}{left}\SetKwData{This}{this}\SetKwData{Up}{up}
\SetKwFunction{Union}{Union}\SetKwFunction{FindCompress}{FindCompress}
\SetKwInOut{Input}{input}\SetKwInOut{Output}{output}
\Input{

Calibration Measurements for LED $l$: \{$\vect{r}_{R,n},s_n,\vect{r}_{S,l}$\},

RSS during localization: \{$\tilde{s}_l$\},

where $n=0,...N-1$, and $l=0,..,L-1$.}
\Output{Localization coordinates $\tilde{\vect{r}}_R$}
\BlankLine
// Stage 1. Calibrate tilt and gain and estimate noise variance for each LED.\\
\For{\text{LED} $l$ \KwTo $L-1$}{
    Use   (\ref{eq:estimate_n_c}) to calibrate $\hat{\vect{n}}_{S,l}$ and $\hat{c}_l$. \\
    Use   (\ref{eq:noise_estimate}) to estimate $\hat{\sigma}_l^2$.
}
\BlankLine
//Stage 2. Perform localization. \\
Use   (\ref{eq:wml}) to compute $\tilde{\vect{r}}_R$.

\caption{  weighted \gls{ls}}
\label{algo_LS}
\end{algorithm}

Given the uncorrelated nature of observations from each individual LED $l$, the overall \gls{fim} can be written as the sum of the matrices related to each LED
\begin{equation}
\label{eq:fim_ml}
    \mathbf{I}(\tilde{\mathbf{r}}_R) = \sum _{l=0}^{L-1}  \mathbf{I}_l(\tilde{\mathbf{r}}_{R}).
\end{equation}

Utilizing the general result of (3.31,~\cite{kay1993fundamentals}), the \gls{fim} associated with $\tilde{\vect{r}}_R$ for LED $l$ is then defined as 

\begin{equation}
\label{eq:localization CRLB}
\begin{aligned}
\mathbf{I}_l(\tilde{\mathbf{r}}_{R}) & =\frac{1}{\E \left[n^2_l\right]}\frac{\mathrm{d} \mu_l}{\mathrm{d} \tilde{\mathbf{r}}_R} \frac{\mathrm{d} \mu_l}{\mathrm{d} \tilde{\mathbf{r}}_R^T} + \frac{1}{2} \frac{1}{(\E\left[n^2_l\right])^2} \frac{\mathrm{d} \E \left[n_l^2\right]}{\mathrm{d} \tilde{\mathbf{r}}_R} \frac{\mathrm{d} \E\left[n_l^2\right]}{\mathrm{d} \tilde{\mathbf{r}}_R^T}.
\end{aligned}
\end{equation}

Introducing $\mathbf{\Omega}_{l}=\mathbf{G}_l \mathbf{G}_l^T \in \mathbb{R}^{3\times 3}$, the following relations in   (\ref{eq:localization CRLB}) are obtained
$$
\begin{aligned}
\frac{\mathrm{d} \mu_l}{\mathrm{d} \tilde{\mathbf{r}}_R} &= -4\frac{\hat{c_l}{h}_l}{\tilde{d}_l^{6}} \tilde{\mathbf{d}}_l (\hat{\mathbf{n}}_{S, l}^T \tilde{\mathbf{d}}_l) + \frac{\hat{c_l} {h}_l}{\tilde{d}_l^{4}} \hat{\mathbf{n}}_{S, l} \\
\end{aligned}
$$
and
$$
\begin{aligned}
   \frac{\mathrm{d} \E\left[n_l^2\right]}{\mathrm{d} \tilde{\mathbf{r}}_R} &= \sigma_l^2 \frac{-8 {h}_l^2}{\tilde{d_l}^{10}} \tilde{\mathbf{d}}_l (\tilde{\mathbf{d}}_l^T \mathbf{\Omega}_{l}^{-1} \tilde{\mathbf{d}}_l )
  + \sigma_l^2 \frac{2 {h}_l^2}{\tilde{d_l}^{8}} \mathbf{\Omega}_{l}^{-1} \tilde{\mathbf{d}}_l. \\
\end{aligned}
$$

Finally, the \gls{crlb} is obtained as  $ \mathbf{C}({\tilde{\mathbf{r}}}_R) = \mathbf{I}^{-1}(\tilde{\mathbf{r}}_R)$. In the matrix $\mat{C}(\tilde{\vect{r}}_R)$ , the diagonal elements $\mat{C}(\tilde{\vect{r}}_R)$ represent the variance lower bound of the  $x,y$ and $z$ axes, respectively. Given that the object is always on the ground, we focus solely on the variance of $x$ and $y$ axis. The CRLB for the   weighted \gls{ls} estimator, is calculated as 
\begin{equation}
\label{eq:crlb_ _ml}
    \operatorname{CRLB} = \sqrt{ [\mat{C}(\tilde{\vect{r}}_R)]_{0,0} + [\mat{C}(\tilde{\vect{r}}_R)]_{1,1} }
\end{equation}

\section{Simulation and Experimental Results}
\label{sec:SIMULATION AND EXPERIMENTAL RESULTS}

\subsection{Simulation Localization Results}

\begin{table}[!t]
    \centering
    \caption{Simulation experimental set-up for various tilt angles and gain analysis}
    \label{tab:simulation_setup_tilt_gain}
    \begin{tabular}{lc}
    \hline
    \textbf{ Parameters} & \textbf{ Values} \\ \hline
    Room Dimensions                & \SI{8}{\meter} $\times$ \SI{8}{\meter}                                            \\
    Height $h$                     & \SI{5.71}{\meter}                                                   \\
    Number of LEDs                 & 4                                                    \\
    LEDs position [\SI{}{\meter}] & \begin{tabular}[c]{@{}c@{}}{[}-2, 6{]}, {[}2, 6{]}\\ {[}-2, 2{]}, {[}2, 2{]}\end{tabular} \\
    \textbf{LEDs tilt polar angle} ${\Delta \theta_l}$ & $\mathcal{N}(\SI{0}{\degree}, (\SI{2}{\degree})^2 )$ \\
    \textbf{LEDs tilt azimuthal angle} $\Delta \gamma_l$     & Random $\SI{0}{\degree} \sim \SI{360}{\degree} $\\
    LED radiation pattern &    Lambertian with $m=1$  \\
    \textbf{Gains} $c_l$ & Random $1 \sim 3$ \\
    Noise standard deviation $\sigma_l$                   &    $0.0001$ (uniform for all LEDs)\\
    \hline
    \end{tabular}
    \end{table}

    Tab.~\ref{tab:simulation_setup_tilt_gain} summarizes the parameters used in  our calibration and localization simulation. The square room with a size of  \SI{64}{\m^2}  has a height of  \SI{5.71}{\meter} (same height as the factory environment in Section \ref{sec:factory_result}), equipped with 4 LEDs. These LEDs are slightly tilted, with their polar angles $\Delta \theta_l$ following a Gaussian distribution with zero mean and standard deviation of \SI{2}{\degree},  while their  azimuthal angles $\Delta \gamma_l$ are uniformly distributed between $\SI{0}{\degree} \sim \SI{360}{\degree} $. The gain $c_l$ are also  drawn from a uniform distribution within the range of $1 \sim 3$. This configuration includes varying tilt angles, a Lambertian radiation pattern, and different gains for each LED, is further wrapped up by a Gaussian noise model.

    \begin{figure}[!t]
        \centering
        \includegraphics[width=\linewidth]{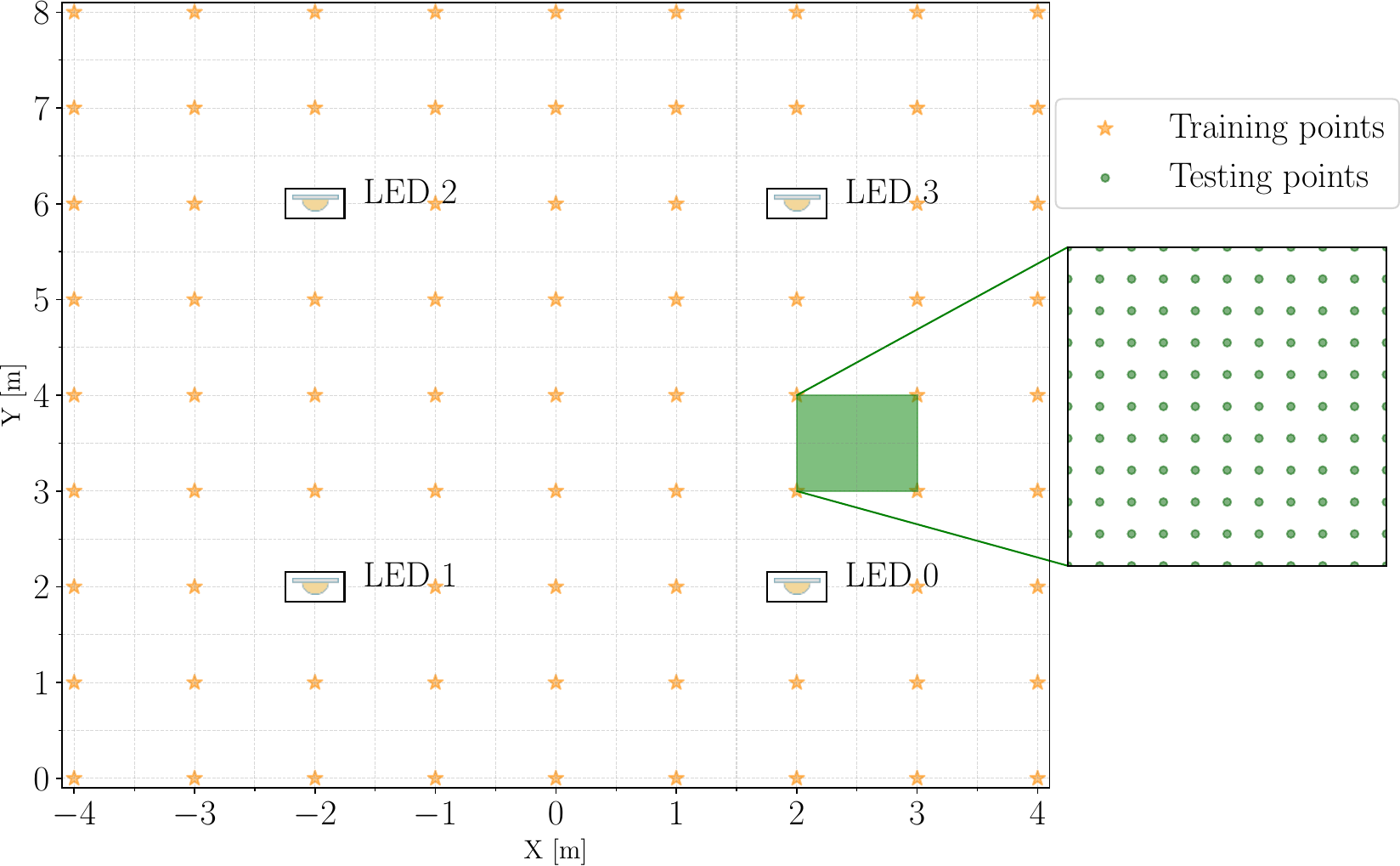}
        \caption{ Distribution of training and testing datasets. The training datasets are yellow stars, uniformly sampled with various sizes \(N\). The testing datasets, represented by green points, are uniformly separated at intervals of \SI{10}{\cm}, totaling $6561$ points.}
        \label{fig:simulation_angles}
    \end{figure}

    Fig.~\ref{fig:simulation_angles} illustrates the training datasets and testing datasets. The training sets are uniformly sampled within the ranges \([-4,4]\) and \([0,8]\) for the X and Y axes on the map, regardless of  the  training set size. We repeated the experiment 50 times for each training set size $N$ to  to evaluate the effects of tilt angels $\vect{n}_S$ and gain $c$. The training set sizes  are $N = \{ 9,16,25,36,64\}$. The testing data size keeps the same number of $6561$ points.

\begin{figure}[tb]
    \centering
    \includegraphics[width=\linewidth]{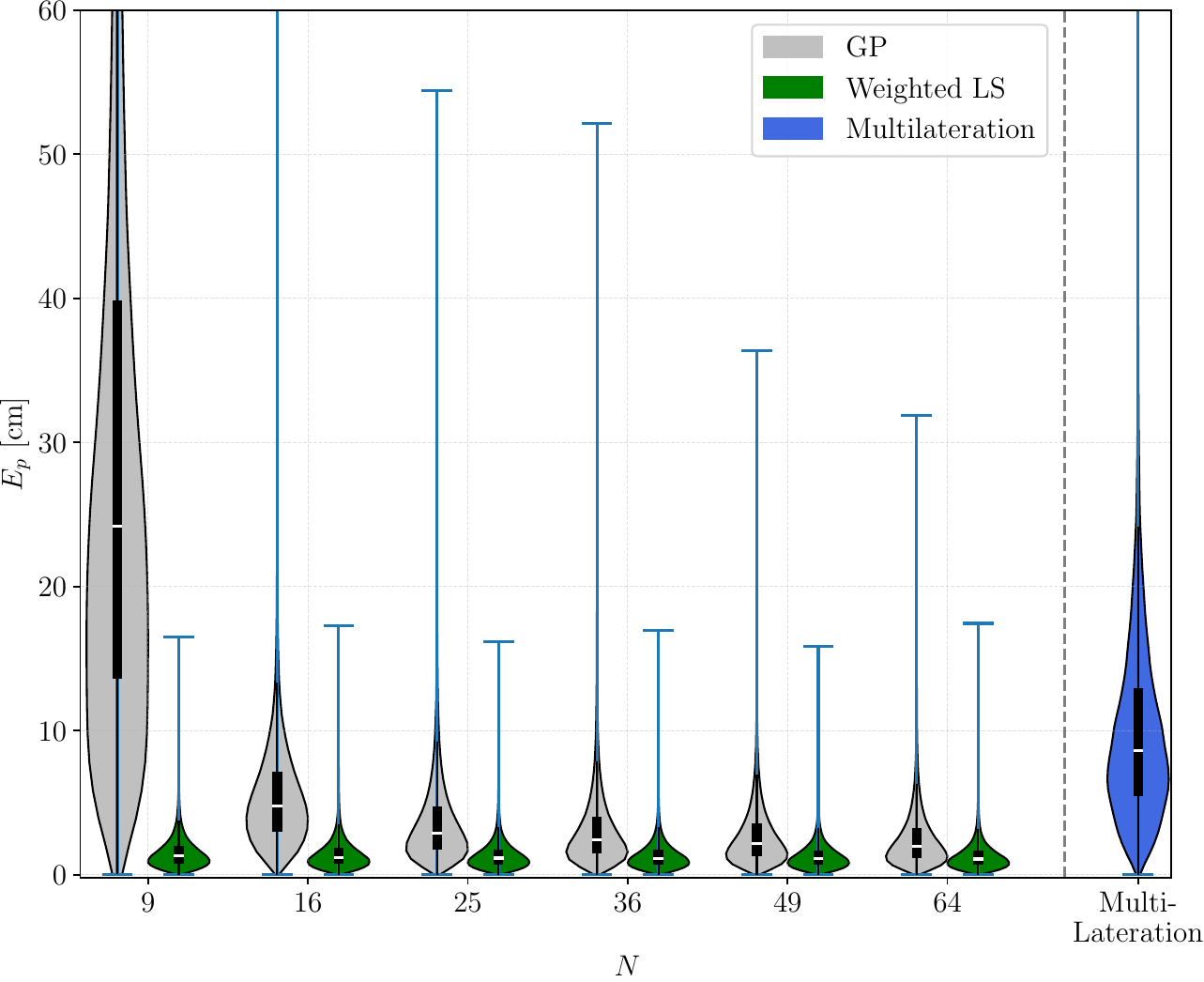}
    \caption{Simulations Results: Comparison between GPs, Weighted LS and Multilateration method for various simulated training datasets.}
    \label{fig:benchmark_boxplot}
\end{figure}

The simulation results are shown in Fig.~\ref{fig:benchmark_boxplot}, the  horizontal axis quantifies the localization error in terms of Euclidean distance, defined as
$$
E_p = \sqrt{(\hat{x} - x)^2 +(\hat{y}-y)^2}
$$
where $\hat{x},\hat{y}$ are the estimated coordinates, and $x,y$ are the ground truth coordinates. The multilateration model \cite{zhuang2018survey} has the assumption of no tilt angles for the LEDs and PD, formalized by  $\cos(\phi_{S,n}) = \cos(\phi_{R,n}) = h/d_n$ for each LED. This specific assumption establishes the direct link between RSS values and distances as  $d_n^4 \propto 1/s_n$. Subsequently, $L$ over-determined equations, corresponding to the number of LEDs, are employed to estimate the 2D coordinates of the PD. However, it is  shown in Fig.~\ref{fig:benchmark_boxplot} that these assumptions introduce errors,  with the P50 error being \SI{8.59}{\cm} and the P99 error being \SI{31.05}{\cm} in Tab.~\ref{tab:simulation_result}.

\renewcommand{\arraystretch}{1.2} 
\begin{table}[tb]
\caption{Simulation results: 50 and 99 percentile of the error (unit cm)}
\label{tab:simulation_result}
\resizebox{0.495\textwidth}{!}{%
\begin{tabular}{ccccccc}
    \hline
    Conf. & P50 & P99 & P50 & P99 & P50 & P99 \\ \hline
     & \multicolumn{2}{c}{$N=9$} & \multicolumn{2}{c}{$N=16$} & \multicolumn{2}{c}{$N=25$} \\
    \gls{gp} $E_p$ & 24.18 & 128.34 & 4.80 & 19.15 & 2.91 & 14.68 \\
    weighted \gls{ls} $E_p$ & 1.29 & 5.60 & 1.21 & 5.21 & 1.16 & 4.99 \\
     & \multicolumn{2}{c}{$N=36$} & \multicolumn{2}{c}{$N=49$} & \multicolumn{2}{c}{$N=64$} \\
    \gls{gp} $E_p$ & 2.45 & 12.86 & 2.17 & 10.67 & 1.96 & 9.60 \\
    weighted \gls{ls} $E_p$ & 1.15 & 4.95 & 1.12 & 4.77 & 1.09 & 4.73 \\ \hline
    Conf. & \multicolumn{3}{c}{P50}   &  \multicolumn{3}{c}{P99}  \\ 
    Multilateration $E_p$ & \multicolumn{3}{c}{8.59}   &  \multicolumn{3}{c}{31.05} \\ \hline
    \end{tabular}
}
\end{table}
\renewcommand{\arraystretch}{1} 

As shown in the Tab.~\ref{tab:simulation_result}, the weighted \gls{ls} method always outperforms the \gls{gp} method across all training size $N$, especially in reducing P99 errors. For example, with a training set size of $N=9$, the \gls{gp} method produces significant errors with P50 and P99 values of \SI{24.18}{\cm} and \SI{128.34}{cm}, respectively. In contrast, the weighted \gls{ls} method reduces these errors to \SI{1.29}{\cm} and \SI{5.6}{\cm}, respectively. When the training set size gets bigger, both methods show improvement, but weighted \gls{ls} consistently demonstrates superior performance, particularly in minimizing P99 errors, which is important for applications that require high reliability.

\subsection{Experimental Localization Results}
\subsubsection{Result on Platform Environment}

\begin{figure}[tb]
    \centering
    \includesvg[width=0.8\linewidth]{Images/experimental_setup.svg}
    \caption{Platform setup featuring four LEDs mounted at a height of \SI{1.284}{\meter} within a \SI{3}{\meter} $\times$ \SI{3}{\meter} environment.}
    \label{fig:real_experimental_setup}
\end{figure}

To evaluate the performance of the   weighted \gls{ls} algorithm, we designed and implemented an experimental setup, illustrated in Fig.~\ref{fig:real_experimental_setup}. The setup includes four LEDs positioned at a height of \SI{1.284}{\meter}, with a PD capable of moving within a \SI{3}{\meter} $\times$ \SI{3}{\meter} plane. Tab.~\ref{tab:setup} below summarizes the key parameters of our experimental setup. We recorded measurements at $158$ distinct points, each with precisely determined coordinates and RSS value\footnote{\url{http://dx.doi.org/10.21227/f28n-6292}}.  Reflecting on the research by \cite{knudde2020data,raes2020experimental}, which showed \gls{gp} outperforming \gls{mlp} and multilateration in RSS-based \gls{vlp}, our study will not include \gls{mlp} as  a benchmark, focusing on the more effective \gls{gp} approach.

\begin{table}[tb]
\centering
\caption{Platform setup parameters}
\label{tab:setup}
\begin{tabular}{cc}
\hline
\textbf{Parameters}                  & \textbf{Values}                                                    \\ \hline
Height $h$                           & \SI{1.284}{\meter}                                                     \\
Number of LEDs                           & 4\\
Room dimensions                             & \SI{3}{\meter} $\times$ \SI{3}{\meter} \\
LEDs position [\SI{}{\meter}]                     & \begin{tabular}[c]{@{}c@{}}{[}$1.059$,~$2.470${]},~{[}$2.428$,~$2.552${]}\\ {[}$1.031$,~$0.630${]},~{[}$2.402$,~$0.582${]}\end{tabular} \\ 
Number of measurement points      &  $158$ \\

\hline \\
\end{tabular}
\end{table}

In our experiments, calibrated parameters $\hat{\mathbf{n}}_{S,l}$ and $\hat{c}_l$ were employed in the   weighted \gls{ls}  localization algorithm.  It should be noted that in this dataset, the majority of measurement points were centrally located on the map, while LEDs were positioned at each corner. This layout diverges from the optimal calibration scheme, which recommends positioning measurement points near the LEDs. Owing to this mismatch, direct validation of the optimal calibration scheme with this specific dataset was not viable. Therefore, we implemented a calibration approach involving uniform random sampling from the dataset. This method enabled us to evaluate the impact of LED tilt on positioning accuracy and to assess the performance of the   weighted  \gls{ls} algorithm under these conditions.

Fig.~\ref{fig:Platform_results_violine} offers a detailed comparative analysis of three localization techniques: multilateration, \gls{gp}, and   weighted \gls{ls}, with the horizontal axis denoting the size of the training set employed for calibration or model training. 

The black rectangle plots outline the interquartile range (P25 to P75), and the P50 error is indicated by the white line within each rectangle.  The weighted \gls{ls} outperforms \gls{gp} at smaller training set sizes, with \gls{gp} reaching a similar median performance to   weighted \gls{ls} at a training set size of $25$. Despite improvements with larger training sets, \gls{gp} consistently shows higher outlier errors than   weighted \gls{ls}, a reflection of machine learning algorithms' challenges with data-sparse regions. Multilateration, based on the assumption of non-tilted LEDs, consistently yields the least accurate outcomes, underscoring the significant impact of LED tilt on localization accuracy and the necessity of accounting for it in calibration.

\begin{figure}
    \centering
    \includesvg[width=1\linewidth]{Images/Measurement_result.svg}
    \caption{Platform results: comparison between \gls{gp}, multilateration and   weighted \gls{ls} methods for the different size of training set. The horizontal axis denotes the training size $N$. The vertical axis represent the Euclidean error. }
    \label{fig:Platform_results_violine}
\end{figure}

\renewcommand{\arraystretch}{1.2} 
\begin{table}[tb]
\caption{Platform results: 50 and 99 percentile of the error (unit cm) and the improvement rate of using  weighted  \gls{ls} over \gls{gp}}
\label{tab:platform_result}
\resizebox{0.495\textwidth}{!}{%
\begin{tabular}{ccccccc}
\hline
Conf. & P50 & P99 & P50 & P99 & P50 & P99 \\ \hline
 & \multicolumn{2}{c}{$N=9$} & \multicolumn{2}{c}{$N=16$} & \multicolumn{2}{c}{$N=25$} \\
\gls{gp} $E_p$ & 7.95 & 42.80 & 4.88 & 29.91 & 4.29 & 19.99 \\
 weighted  \gls{ls} $E_p$ & 3.24 & 11.08 & 3.16 & 11.64 & 3.60 & 13.67 \\
Improvement rate [\%] & \textbf{58.7} & \textbf{74.1} & 35.3 & 61.1 & 16.1 & 31.6 \\
 & \multicolumn{2}{c}{$N=36$} & \multicolumn{2}{c}{$N=49$} & \multicolumn{2}{c}{$N=64$} \\
\gls{gp} $E_p$ & 3.28 & 11.72 & 3.20 & 13.99 & 2.88 & 10.00 \\
 weighted  \gls{ls} $E_p$ & 3.11 & 11.44 & 2.87 & 9.54 & 2.64 & 7.56 \\
Improvement rate [\%] & 5.2 & 2.4 & 10.3 & 31.8 & 8.3 & 24.4 \\ \hline
\end{tabular}
}
\end{table}
\renewcommand{\arraystretch}{1} 

Tab.~\ref{tab:platform_result} quantifies the localization error at P50 and P99, respectively. The improvement rate is defined by
$$
\text{Improvement rate [\%]} =\frac{\text{\gls{gp} } E_p - \text{Weighted   \gls{ls} } E_p}{ \text{\gls{gp} } E_p  }.
$$

With a minimal training set size of $N=9$, our weighted  \gls{ls} method already reaches its best performance, achieving significant gains of \SI{74}{\%} at the P99 and  \SI{59}{\%} at the P50 over \gls{gp}. This efficiency highlights the method's advantage of requiring only a small number of data points for optimal calibration. Even as the training set size increases, with the most modest gain observed at \SI{2.4}{\%} at the P99 for $N=36$, weighted \gls{ls} continues to outperform \gls{gp} across the board, proving its consistent effectiveness in enhancing localization accuracy with easy calibration. In addition, the P50 and P99 for multilateration are \SI{7.4}{\cm} and \SI{25.7}{\cm}, respectively.

\begin{figure}[!t]
    \centering
    \includesvg[width=0.9\linewidth]{Images/CRLB_Wegithed_LS_Result.svg}
    \caption{Platform results: The red box represents the \gls{crlb} of  weighted  \gls{ls}. The green box displays the  weighted  \gls{ls} as same as Fig.~\ref{fig:Platform_results_violine}. The horizontal axis represents the training size $N$. The vertical axis represents the Euclidean error or \gls{crlb}.}
    \label{fig:Estimate_Lowerbound}
\end{figure}

Using the same calibration measurements as Fig.~\ref{fig:Platform_results_violine}, we calculated the \gls{crlb} for each point in the test dataset via   (\ref{eq:crlb_ _ml}), by integrating estimated tilt and gain. In Fig.~\ref{fig:Estimate_Lowerbound}, the white line within each box denotes the P50 metric. When comparing various training dataset sizes, the theoretical variance of the weighted \gls{ls} method appears to stabilize, as indicated by the consistent height of the red boxes across different training sizes. This observation also accounts for the persistence of the P50 metric despite increasing the training set size beyond 36, suggesting that augmenting the training set does not substantially impact the median error.

\subsubsection{Result in Factory Environment}
\label{sec:factory_result}

\begin{figure}[tp]
    \centering
    \includegraphics[width=0.85\linewidth]{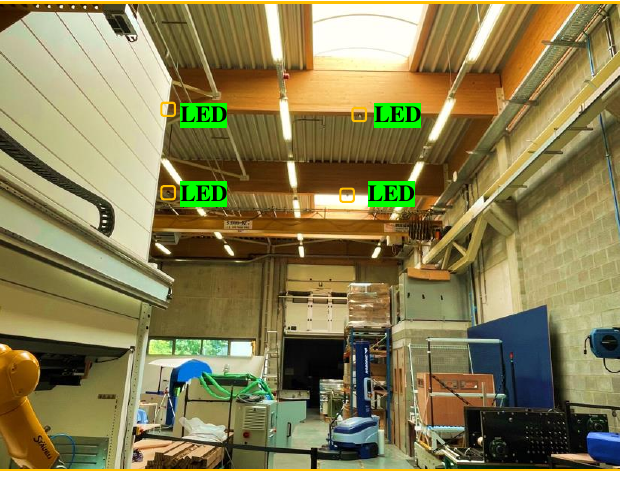}
    \caption{Factory setup featuring four LEDs mounted at a height of \SI{5.71}{\meter} within a \SI{7}{\meter} $\times$ \SI{7}{\meter} environment. Further details on the setup can be found in \cite{raes2021cellular}.}
    \label{fig:factory_environment}
\end{figure}

To check how well the  weighted \gls{ls} method  works in a real-world environment, we utilized a dataset \cite{raes2021cellular} from our group collected in a factory setting, as depicted in Fig.~\ref{fig:factory_environment}. The LEDs were mounted at a height of \SI{5.71}{\meter}, and the measurements were taken along a trajectory within a \SI{7}{\meter} $\times$ \SI{7}{\meter} area on the ground. The raw data were subsampled to ensure a minimum interval of \SI{20}{\centi\meter} between each measurement point. As detailed in Tab.~\ref{tab:factory_setups}, the total number of measurement points is 292. To evaluate the performance of the localization methods, we randomly selected the training dataset with different sizes $N = \{6, 9, 16, 25, 36, 49\}$, and repeated 10 times to obtain statistically relevant results.

\begin{table}[tb]
    \centering
    \caption{Factory setup parameters}
    \label{tab:factory_setups}
    \begin{tabular}{cc}
    \hline
    \textbf{Parameters}                  & \textbf{Values}                                                    \\ \hline
    Height $h$                           & \SI{5.71}{\meter}                                                     \\
    Number of LEDs                           & 4\\
    Room dimensions                             & \SI{7}{\meter} $\times$ \SI{7}{\meter} \\
    LEDs position [\SI{}{\meter}]                     & \begin{tabular}[c]{@{}c@{}}{[}$4$,~$6${]},~{[}$8$,~$6${]}\\ {[}$4$,~$0${]},~{[}$8$,~$0${]}\end{tabular} \\ 
    Number of measurement points     &  $292$ \\
    Minimal interval between each points & \SI{20}{\cm} \\
    \hline \\
    \end{tabular}
    \end{table}

    Fig.~\ref{fig:factory_results} shows the violin plots that compare the accuracy of the \gls{gp}, multilateration, and weighted \gls{ls} methods with different training dataset sizes. The weighted \gls{ls} method always  surpasses the GP method for the P99 error, especially when the training dataset is small. As the  training size increases, the accuracy of the GP method improves, but its maximum $E_p$  remains approximately twice as large as that of the weighted LS method at training size $N=49$. 
    This discrepancy generated because GP inherently perform interpolation in the input space (i.e, 4-dimensional RSS values). If the training data does not sufficiently cover this entire 4-dimensional space, it becomes challenging for the GP method to minimize the maximum error.

\begin{figure}[tb]
    \centering
    \includegraphics[width=\linewidth]{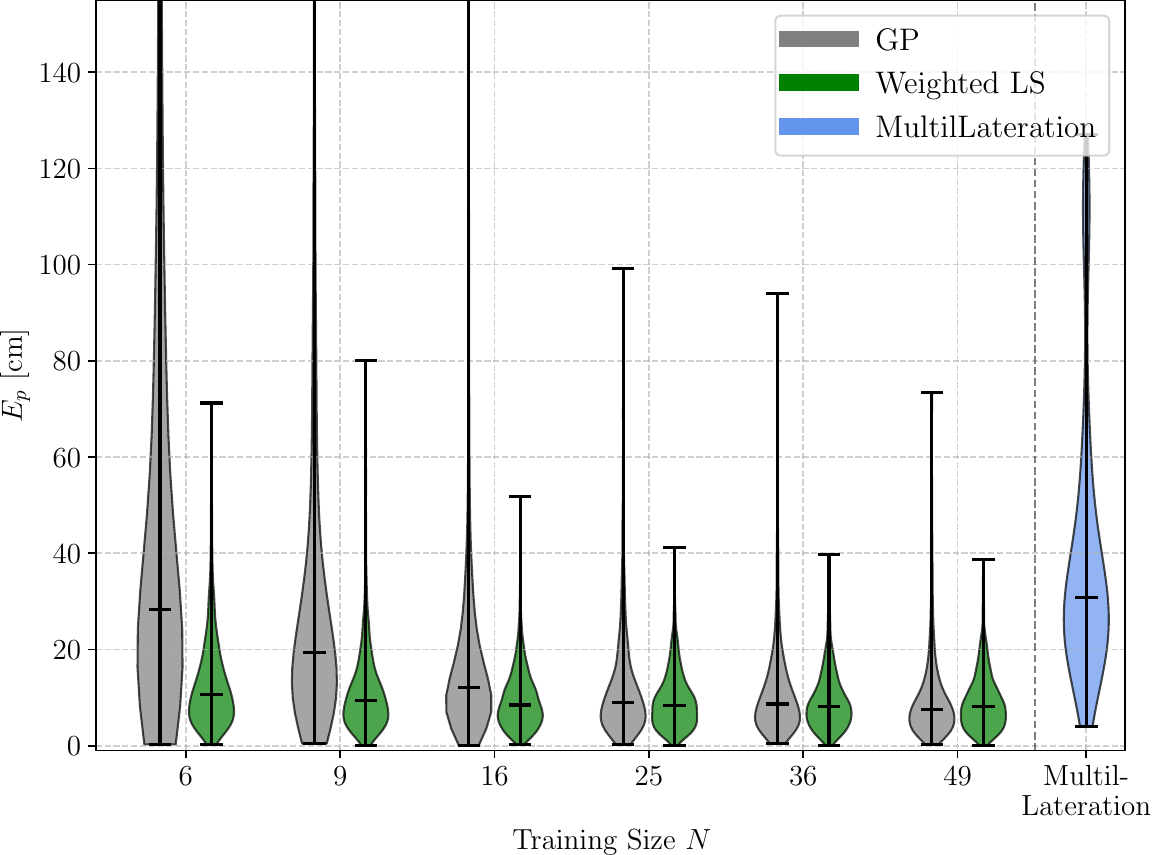}
    \caption{ Factory results: comparison between \gls{gp}, multilateration and   weighted \gls{ls} methods for the different training size $N$.}
    \label{fig:factory_results}
\end{figure}

\renewcommand{\arraystretch}{1.2} 
\begin{table}[!t]
\caption{Factory results: 50 and 99 percentile of the error (unit cm)}
\label{tab:factory_result}
\resizebox{0.495\textwidth}{!}{%
\begin{tabular}{ccccccc}
\hline
Conf. & P50 & P99 & P50 & P99 & P50 & P99 \\ \hline
 & \multicolumn{2}{c}{$N=6$} & \multicolumn{2}{c}{$N=9$} & \multicolumn{2}{c}{$N=16$} \\
GP $E_p$ & 28.40 & 513.73 & 19.39 & 239.58 & 12.07 & 112.12 \\
Physical Model $E_p$ & 10.64 & 47.47 & 9.36 & 43.85 & 8.46 & 28.46 \\
Improvement rate [\%] & \textbf{62.54} & \textbf{90.75} & 51.71 & 81.70 & 29.86 & 74.63 \\
 & \multicolumn{2}{c}{$N=25$} & \multicolumn{2}{c}{$N=36$} & \multicolumn{2}{c}{$N=49$} \\
GP $E_p$ & 9.02 & 55.70 & 8.67 & 58.67 & \textbf{7.95} & 43.43 \\
Physical Model $E_p$ & 8.39 & 28.22 & 8.21 & 29.27 & \textbf{8.04} & 29.73 \\
Improvement rate [\%] & 6.99 & 49.35 & 5.31 & 50.11 & \textbf{-1.1} & 31.55 \\ \hline
\end{tabular}
}
\end{table}
\renewcommand{\arraystretch}{1} 

Tab.~\ref{tab:factory_result} gives more details, showing that with just six training points, the weighted LS improves P99 by 90.75\% compared to GP. Even with larger training sets, such as $N=36$, the weighted LS method still  outperform GP at P99, with a 50.11\% improvement. At $N=49$, while GP slightly surpasses weighted LS at P50, its maximum error remains higher. This shows GP’s weakness in controlling maximum errors when training data is insufficient. Despite GP can theoretically achieve higher accuracy with more data, it lacks the interpretability of the weighted LS method, where each parameter has a clear physical meaning. Even with minimal data, weighted LS gives reliable results, achieving a P50 of \SI{10}{\cm} with just six points, making it a robust choice for scenarios with limited data or where model transparency is crucial.

However, we noticed that the accuracy of the results in the factory environment (Tab.~\ref{tab:factory_result}) is lower compared to those obtained on the simulation (Tab.~\ref{tab:simulation_result}). One possible reason (other reasons such as non Gaussian noise, different noise variance, multipath effect) is that in the factory, data was collected with a PD mounted on a moving vehicle, which may have caused slight shaking and made the PD hard to stay perfectly horizontal. This movement affects both the Weighted LS method and the GP algorithm. But the results still meet the requirements for most practical applications, even though the errors are bigger than  in the controlled platform experiments.

\section{Conclusion}
\label{sec:CONCLUSION}
This paper presented a comprehensive study on the calibration and localization in \gls{vlp} systems. The research began with a detailed explanation of the \gls{vlp} system and the development of a calibration model. We introduced a novel calibration framework that leverages an optimal calibration scheme to minimize \gls{mse}. Our development of a low-complexity, closed-form estimator for LED tilt and gain achieves the \gls{crlb} for \gls{mse}. 
While physical methods like lasers can measure LED tilt angles accurately, aligning these measurements with the global coordinate system needs extra devices, which adds complexity. Our approach simplifies this by directly estimating tilt angles in the global coordinate system, making it more practical for real-world applications.

In the localization phase, the  weighted  \gls{ls} algorithm was employed, which proved to be superior to \gls{gp} and multilateration. This underscores the   weighted \gls{ls} algorithm's capability to ensure dependable localization accuracy, achieving up to \SI{3.2}{\cm} at the P50 and \SI{11}{\cm} at the P99, even in environments with limited data.

Overall, this research enriches our understanding of \gls{vlp} systems, particularly in calibration and localization aspects. It underscores the pivotal role of factors like LED tilt and model-based interpretability, paving the way for future enhancements in \gls{vlp} system performance and the potential development of more effective physics-inspired machine learning approaches. 
An important perspective of our work is to investigate scenarios where the PD does not remain horizontal during movement. Interesting mitigation techniques may include integrating an inertial measurement unit (IMU) on the PD to monitor its tilt in real time.

\section*{Acknowledgment}
The authors thank Willem Raes and Jorik De Bruycker \cite{knudde2020data,raes2020experimental} for their crucial measurement contributions to this research. Artifical intelligence (AI) was used for finding (code) templates, debugging, editing and grammar enhancement.

\appendix
\section{APPENDIX}

\subsection{Proof of Proposition \ref{proposition:crlb bold_c}}
\label{ap:2}
Combing    (\ref{eq:answer}) and   (\ref{eq:newrangemodel}), The expectation error of  $\hat{\mathbf{c}}$  is found to be
$$
\begin{aligned}
\E \left[\mathbf{c}-\hat{\mathbf{c}}\right] & = \E \left[\mathbf{c}-(\mathbf{G G}^T)^{-1} \mathbf{G} \mathbf{s}\right]  \\
& = \E \left[ \mathbf{c}-(\mathbf{G G}^T)^{-1} \mathbf{G}(\mathbf{G}^T \mathbf{c}+\mathbf{w}) \right] \\
& = \E \left[-(\mathbf{G G}^T)^{-1} \mathbf{G} \mathbf{w} \right] \\
& = \mathbf{0}
\end{aligned}
$$
which denotes it's an unbiased estimator. The covariance matrix is then
$$
\begin{aligned}
    \E \left[(\mathbf{c}-\hat{\mathbf{c}})(\mathbf{c}-\hat{\mathbf{c}})^T\right] &= \E \left[(\mathbf{G G}^T)^{-1} \mathbf{G} \mathbf{w} \mathbf{w}^T \mathbf{G}^T (\mathbf{G G}^T)^{-1} \right] \\
    &= \sigma^2 (\mathbf{G G}^T)^{-1}
\end{aligned}
$$
and the MSE is derived from the trace of the covariance matrix
$$
\begin{aligned}
    \E \left[ \| \mathbf{c} - \hat{\mathbf{c}}   \|^2 \right] &= \sigma^2\text{tr} \left[(\mathbf{GG}^T)^{-1} \right]. \\ 
\end{aligned}
$$

Following the \gls{fim} in (3.31, \cite{kay1993fundamentals}), and given $\mathbf{s}~\sim~\mathcal{N}(\mathbf{G}^T\mathbf{c}, \sigma^2 \mathbf{I} )$ from   (\ref{eq:newrangemodel}). we have the \gls{fim} as
$$
\begin{aligned}
    \mathbf{I}({\mathbf{c}})&= \frac{1}{\sigma^2} [\frac{\partial \mathbf{G}^T\mathbf{c}}{\partial \mathbf{c}}]^T [\frac{\partial \mathbf{G}^T\mathbf{c}}{\partial \mathbf{c}}] \\
    &= \frac{1}{\sigma^2} \mathbf{GG}^T.
\end{aligned}
$$

Therefore, \gls{crlb} is the inverse of the \gls{fim} denoted as $\vect{C}(\vect{c}) = \mathbf{I}^{-1}(\mathbf{c}) = \sigma^2(\mat{GG}^T)^{-1} $, and $\E \left[ (\mathbf{c}-\hat{\mathbf{c}})(\mathbf{c}-\hat{\mathbf{c}})^T \right] = \vect{C}(\vect{c})$ establishing the efficiency of the estimator.

\subsection{Proof of Proposition \ref{proposition:crlb bold_n}}
\label{ap:4}
We define the calibration error of $\hat{\mathbf{n}}_S$ as $\mathbf{e}$, expressed by the equation
$$
\mathbf{e}=\mathbf{n}_S-\hat{\mathbf{n}}_S=\mathbf{n}_S-\frac{(\mathbf{G G}^T)^{-1} \mathbf{G} \mathbf{s}}{\|(\mathbf{G G}^T)^{-1} \mathbf{G} \mathbf{s}\|}.
$$

Utilizing the established   (\ref{eq:newrangemodel}), $\mathbf{s}=\mathbf{G}^T \mathbf{c}+\mathbf{w}$, the error $\mathbf{e}$ can be reformulated as
$$
\mathbf{e} =\mathbf{n}_S-\frac{c \mathbf{n}_S+ (\mathbf{G} \mathbf{G}^T  )^{-1} \mathbf{G} \mathbf{w}}{ \|c \mathbf{n}_S+ (\mathbf{G G}^T  )^{-1} \mathbf{G w}  \|} .
$$

The subsequent analysis for \gls{mse} and bias presents complexities due to the norm normalization involved. A Taylor series approximation is employed to asymptotically bound the error. Let $\vect{w}=\sigma \vect{\nu}$, where $\vect{\nu}$ is a vector of i.i.d Gaussian unit distribution elements, we define $g(\vect{w})$ and its first order approximation as
$$
\begin{aligned}
&g(\mathbf{w}) =\frac{1}{\|c \mathbf{n}_S+(\mathbf{G G}^T)^{-1} \mathbf{G} \mathbf{w}\|} \\
& = \frac{1}{c} + \frac{\mathrm{d} g}{\mathrm{d} \mathbf{w}^T}\mathbf{w} + O(\sigma^2), ~~~ \text{as } \sigma^2 \rightarrow 0  \\
\end{aligned}
$$
where the derivative of $g$ with respect to $\vect{w}^T$ is 
$$
\begin{aligned}
\frac{\mathrm{d} g}{\mathrm{d} \mathbf{w}^T} & = -\frac{\left(c \mathbf{n}_S + (\mat{GG}^T)^{-1}\mat{G}\vect{w} \right)^T}{\left\|c \mathbf{n}_S + (\mat{GG}^T)^{-1}\mat{G}\vect{w}\right\|^3} (\mathbf{G G}^T)^{-1} \mathbf{G} \\
& =-\frac{\mathbf{n}_S^T}{c^2} (\mathbf{G} \mathbf{G}^T)^{-1} \mathbf{G}, ~~~\text{as } \sigma^2 \rightarrow 0.
\end{aligned}
$$

Finally, the approximation of $g(\mathbf{w}) $ is 
$$
g(\vect{w}) = \frac{1}{c} -\frac{\mathbf{n}_S^T}{c^2}(\mathbf{G G}^T)^{-1} \mathbf{Gw}  + O(\sigma^2), ~\text{as } \sigma^2 \rightarrow 0.
$$

Substituting this in the error $\mathbf{e}$ produces
$$
\begin{aligned}
\mathbf{e}&=\mathbf{n}_S-\left(c \mathbf{n}_S+(\mathbf{G G}^T)^{-1} \mathbf{G w}\right)  g(\vect{w}) \\
&=\frac{1}{c} \mathbf{n}_S \mathbf{n}_S^T(\mathbf{G G}^T)^{-1} \mathbf{G} \mathbf{w}  -\frac{1}{c}(\mathbf{G G}^T)^{-1} \mathbf{G w}  + \epsilon \\
& ~~~ +  \frac{1}{c^2}(\mathbf{G G}^T)^{-1} \mathbf{G} \mathbf{w} \mathbf{w}^T \mathbf{G}^T(\mathbf{G} \mathbf{G}^T)^{-1} \mathbf{n}_S, ~\text{as } \sigma^2 \rightarrow 0
\end{aligned}
$$
where $\epsilon$ represents a small error and $||\epsilon||_2 = O(\sigma)^2$ when $\sigma^2 \rightarrow 0$. This expectation simplifies further based on the assumption that $\E \left[ \mathbf{w}\mathbf{w}^T \right] = \sigma^2 \mathbf{I} $. Therefore, the expectation of bias $\mathbf{n}_S - \hat{\mathbf{n}}_S$ is 
$$
\E \left[ \mathbf{n}_S - \hat{\mathbf{n}}_S \right] = \frac{ O(\sigma^2) }{c^2} (\mathbf{GG}^T)^{-1} \mathbf{n}_S, ~\text{as } \sigma^2 \rightarrow 0 .
$$

The expected bias norm is
$$
 \E [\| \mathbf{n}_S - \hat{\mathbf{n}}_S \|_2 ] = \frac{O(\sigma^2)}{c^2} \sqrt{ \mathbf{n}_S^T (\mat{G}\mat{G}^T)^{-2} \mathbf{n}_S  }, ~\text{as } \sigma^2 \rightarrow 0 .
$$

\subsection{Proof of Proposition \ref{proposition:crlb c}}
\label{ap:3}
According to the   (\ref{eq:answer}) and   (\ref{eq:newrangemodel}) the gain error is 
$$
\begin{aligned}
    c-\hat{c}&=c-\|(\mathbf{G G}^T)^{-1} \mathbf{G} \mathbf{s}\| \\
    &=c- \|(\mathbf{G G}^T)^{-1} \mathbf{G}(\mathbf{G}^T\mathbf{c} + \vect{w} )\|.
\end{aligned}
$$

We also introduce a Taylor series approximation. Denoting $g(\vect{w}) = \norm{(\mathbf{G G}^T)^{-1} \mathbf{G}(\mathbf{G}^T\mathbf{c} + \mathbf{w})}$, and the derivative of $g$ in terms of $\vect{w}$ is 
$$
\begin{aligned}
    &g(\vect{w}) = c + \frac{\mathrm{d} g}{\mathrm{d} \mathbf{w}^T} \vect{w} + O(\sigma^2), ~~~~\text{as } \sigma^2 \rightarrow 0  \\
    &= c + \frac{ ((\mathbf{G G}^T)^{-1} \mathbf{G}\mathbf{G}^T\mathbf{c})^T (\mathbf{GG}^T)^{-1} \mathbf{G} } {\norm{(\mathbf{G G}^T)^{-1} \mathbf{G}\mathbf{G}^T\mathbf{c} }} \vect{w} \\ 
    &+  O(\sigma^2), ~~~\text{as } \sigma^2 \rightarrow 0.
\end{aligned}
$$

Incorporating the definition $\vect{w} = \sigma \vect{\nu}$,  as the $\sigma^2 \rightarrow 0$, the expected bias $\E[c-\hat{c}]$  asymptotically converges to
$$
\E \left[c - \hat{c}\right]= \E\left[-\sigma \vect{n}_S^T (\mathbf{GG}^T)^{-1} \mat{G} \vect{\nu} +  O(\sigma^2)\right] = O(\sigma^2).
$$
Finally, the expected bias norm is
$$
\E \left[ || c - \hat{c}||_2\right] = O(\sigma^2).
$$

\subsection{Proof of Theorem \ref{theorem:optimal_calibration_dataset}}
\label{ap:optimal_calibrationdata}
This section provides a proof of the theorem, structured through several steps. Initially, Lemma~\ref{lemma:ggt_lower_bound} establishes the  lower bound of cost function    (\ref{eq:optimaldataset}), showing it is achieved when $\mat{GG}^T$ is diagonal. Next, Lemma~\ref{lemma:ggt_diagonal} asserts the first two diagonal elements of $\mat{GG}^T$ must be equal. We conclude by demonstrating our scheme meets these conditions and achieves the lower bound, indicating its effectiveness.

\begin{lemma}
\label{lemma:ggt_lower_bound}
For the cost function   (\ref{eq:optimaldataset}), the lower bound is attached when $\mat{GG}^T$ is a diagonal matrix.
\end{lemma}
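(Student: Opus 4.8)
The plan is to lower-bound the cost $\sigma^2\tr[(\mat{GG}^T)^{-1}]$ by a quantity that depends only on the diagonal of $\mat{GG}^T$, and then to pin down the equality case. Abbreviate $\mat{A}=\mat{GG}^T$; this matrix is symmetric and, as soon as the $\mat{g}_n$ span $\mathbb{R}^3$ (which holds for $N\ge 3$ non-degenerate ground points, in particular for the circular configuration used later in the theorem), positive definite. First I would establish the entrywise inequality $[\mat{A}^{-1}]_{ii}\ge 1/[\mat{A}]_{ii}$ by applying Cauchy--Schwarz to the splitting $1=\big(\mat{e}_i^T\mat{A}^{1/2}\mat{A}^{-1/2}\mat{e}_i\big)^2\le(\mat{e}_i^T\mat{A}\mat{e}_i)(\mat{e}_i^T\mat{A}^{-1}\mat{e}_i)=[\mat{A}]_{ii}\,[\mat{A}^{-1}]_{ii}$. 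Summing over $i\in\{0,1,2\}$ gives $\tr[\mat{A}^{-1}]\ge\sum_i 1/[\mat{A}]_{ii}$, i.e. the cost is never smaller than $\sigma^2\sum_i 1/[\mat{A}]_{ii}$, which is precisely its value after $\mat{A}$ is replaced by its diagonal part.

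Next I would extract the equality condition: equality in the $i$-th Cauchy--Schwarz step is equivalent to $\mat{A}^{1/2}\mat{e}_i$ being parallel to $\mat{A}^{-1/2}\mat{e}_i$, i.e. to $\mat{e}_i$ being an eigenvector of $\mat{A}$; requiring this for $i=0,1,2$ forces all standard basis vectors to be eigenvectors, hence $\mat{A}$ diagonal, and conversely the bound is obviously attained when $\mat{A}$ is diagonal. A self-contained alternative that I may prefer to present is a convexity argument: write $\mat{A}=\mat{D}+\mat{O}$ with $\mat{D}$ the diagonal and $\mat{O}$ the off-diagonal part, set $f(t)=\tr[(\mat{D}+t\mat{O})^{-1}]$ on $[0,1]$, note $f$ is convex because the trace of the inverse is convex on the positive-definite cone and $\mat{D}+t\mat{O}$ is affine in $t$, and compute $f'(0)=-\tr[\mat{D}^{-1}\mat{O}\mat{D}^{-1}]=0$ since $\mat{D}^{-1}\mat{O}\mat{D}^{-1}$ has vanishing diagonal. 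Convexity then gives $\tr[\mat{A}^{-1}]=f(1)\ge f(0)=\tr[\mat{D}^{-1}]=\sum_i 1/[\mat{A}]_{ii}$, and since $f''(t)=2\tr\big[(\mat{D}+t\mat{O})^{-1}\mat{Y}_t^2\big]$ with $\mat{Y}_t=(\mat{D}+t\mat{O})^{-1/2}\mat{O}(\mat{D}+t\mat{O})^{-1/2}$ is strictly positive unless $\mat{O}=0$, equality holds precisely when $\mat{GG}^T$ is diagonal.

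Finally I would record the consequence used downstream: for every admissible calibration geometry the cost is bounded below by $\sigma^2\big(1/[\mat{GG}^T]_{00}+1/[\mat{GG}^T]_{11}+1/[\mat{GG}^T]_{22}\big)$, with equality only for a diagonal $\mat{GG}^T$, so in the search for the optimum it suffices to consider diagonal $\mat{GG}^T$ and then optimize its three diagonal entries --- which is exactly what Lemma~\ref{lemma:ggt_diagonal} (equalizing the first two) and the subsequent one-parameter optimization over the common radius $r^*$ carry out. I do not anticipate a genuine obstacle here; the only points needing care are the positive-definiteness (rank) hypothesis on $\mat{GG}^T$, the bookkeeping of the equality case, and the reminder that diagonality of $\mat{GG}^T$ is in fact realizable under the ground-placement constraint, which the explicit circular construction in the proof of the theorem confirms.
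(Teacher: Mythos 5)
Your proposal is correct, and its first route is essentially the paper's own: the proof in Appendix~\ref{ap:ggt_lower_bound} rests on exactly the entrywise bound $[(\mat{GG}^T)^{-1}]_{i,i}\ge 1/[\mat{GG}^T]_{i,i}$ (cited from Kay) and sums over $i$; you rederive that bound from Cauchy--Schwarz, which makes the lemma self-contained, and your treatment of the equality case is sharper than the paper's. The paper asserts equality holds ``either when $\mat{GG}^T$ is diagonal or when its diagonal elements are its eigenvalues''; your eigenvector characterization shows that per-index equality forces each $\mat{e}_i$ to be an eigenvector of the symmetric matrix $\mat{GG}^T$, hence diagonality, so the second alternative is in fact vacuous (for symmetric $\mat{A}$, diagonal entries equal to the eigenvalues already give $\sum_i [\mat{A}]_{i,i}^2=\operatorname{tr}[\mat{A}^2]=\sum_{i,j}[\mat{A}]_{i,j}^2$, forcing $\mat{A}$ diagonal). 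Your convexity alternative, with $f(t)=\operatorname{tr}[(\mat{D}+t\mat{O})^{-1}]$, $f'(0)=0$ and $f''>0$ unless $\mat{O}=0$, is a genuinely different argument not used in the paper; note that $\mat{D}+t\mat{O}=(1-t)\mat{D}+t\,\mat{GG}^T$ stays positive definite on $[0,1]$, so the path is admissible, and this route buys an if-and-only-if equality statement without appealing to any external inequality, at the price of invoking convexity of the trace of the inverse on the positive-definite cone. Your remarks on the implicit hypotheses (positive definiteness of $\mat{GG}^T$ for $N\ge 3$ non-degenerate points, and realizability of the diagonal case by the circular construction) are points the paper leaves unstated and are worth keeping.
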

\begin{proof}
    Please see Appendix.~\ref{ap:ggt_lower_bound}.
\end{proof}

\begin{lemma}
\label{lemma:ggt_diagonal}
The first two diagonal elements of $\mat{GG}^T$ should be equal.
\end{lemma}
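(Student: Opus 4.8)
The plan is to invoke Lemma~\ref{lemma:ggt_lower_bound} to reduce to a diagonal $\mat{GG}^T$ and then separate the radial and angular degrees of freedom of the calibration points. Without loss of generality I would place the LED at $\vect{r}_S=[0,0,h]^T$ and parametrize the $n$-th ground point in polar coordinates, $\vect{r}_{R,n}=[\rho_n\cos\alpha_n,\rho_n\sin\alpha_n,0]^T$, so that $\vect{d}_n=[\rho_n\cos\alpha_n,\rho_n\sin\alpha_n,-h]^T$, $d_n^2=\rho_n^2+h^2$, and $\mat{GG}^T=\sum_n\mathbf{g}_n\mathbf{g}_n^T=\sum_n\frac{h^2}{d_n^8}\vect{d}_n\vect{d}_n^T$. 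By Lemma~\ref{lemma:ggt_lower_bound} an optimal configuration may be taken with $\mat{GG}^T=\diag[\Lambda_1,\Lambda_2,\Lambda_3]$, for which the cost $(\ref{eq:optimaldataset})$ is $\sigma^2\big(\tfrac1{\Lambda_1}+\tfrac1{\Lambda_2}+\tfrac1{\Lambda_3}\big)$ with
$$\Lambda_1=\sum_n\frac{h^2\rho_n^2\cos^2\alpha_n}{d_n^8},\qquad \Lambda_2=\sum_n\frac{h^2\rho_n^2\sin^2\alpha_n}{d_n^8},\qquad \Lambda_3=\sum_n\frac{h^4}{d_n^8}.$$

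The crux is the observation that, because $\cos^2\alpha_n+\sin^2\alpha_n=1$, both $\Lambda_1+\Lambda_2=\sum_n\frac{h^2\rho_n^2}{d_n^8}$ and $\Lambda_3$ depend only on the radii $\{\rho_n\}$; the angles $\{\alpha_n\}$ merely split the fixed total $\Lambda_1+\Lambda_2$ between the first two diagonal entries. Hence, for any fixed choice of radii, minimizing the cost over the angles is equivalent to minimizing $\tfrac1{\Lambda_1}+\tfrac1{\Lambda_2}$ subject to $\Lambda_1+\Lambda_2$ fixed, and the Cauchy-Schwarz (equivalently AM-HM) inequality gives $\tfrac1{\Lambda_1}+\tfrac1{\Lambda_2}\ge\tfrac4{\Lambda_1+\Lambda_2}$ with equality if and only if $\Lambda_1=\Lambda_2$. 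Consequently the cost is bounded below by $\sigma^2\big(\tfrac4{\Lambda_1+\Lambda_2}+\tfrac1{\Lambda_3}\big)$, a quantity that no longer depends on the angles, and any configuration attaining the overall minimum must satisfy $\Lambda_1=\Lambda_2$, i.e., the first two diagonal elements of $\mat{GG}^T$ coincide, which is the assertion.

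The one point needing care is achievability: the displayed lower bound is the true optimum only if, at the radii minimizing $\tfrac4{\Lambda_1+\Lambda_2}+\tfrac1{\Lambda_3}$, there is a configuration that simultaneously makes $\mat{GG}^T$ diagonal and has $\Lambda_1=\Lambda_2$. I would not resolve this inside the lemma but defer it to the concluding step of the proof of Theorem~\ref{theorem:optimal_calibration_dataset}, where the evenly spaced circle of radius $r^*$ is verified to kill the off-diagonal entries (via $\sum_n\cos\alpha_n=\sum_n\sin\alpha_n=\sum_n\sin2\alpha_n=0$ for $N\ge3$) and to equalize the first two diagonal entries (via $\sum_n\cos^2\alpha_n=\sum_n\sin^2\alpha_n=N/2$). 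So the only work specific to this lemma is the separation-plus-convexity argument above, whose sole mild obstacle is justifying cleanly that $\Lambda_1+\Lambda_2$ and $\Lambda_3$ are angle-independent, which is immediate from the Pythagorean identity.
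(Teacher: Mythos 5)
Your proposal is correct and takes essentially the same route as the paper: place the LED above the origin, parametrize the ground points in polar coordinates, invoke Lemma~\ref{lemma:ggt_lower_bound} to work with the bound $\sum_i 1/[\mat{GG}^T]_{i,i}$, fix the radii, and note that the angles only split the fixed total $\Lambda_1+\Lambda_2$ while $\Lambda_3$ is angle-independent. The sole difference is cosmetic: the paper derives $\Lambda_1=\Lambda_2$ from the first-order condition in $\beta_m=\cos^2\phi_m$, whereas you use the AM--HM inequality (which, if anything, is slightly cleaner since it yields global optimality with the equality condition directly), and your deferral of achievability to Theorem~\ref{theorem:optimal_calibration_dataset} matches the paper's organization.
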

\begin{proof}
    Please see Appendix.~\ref{ap:ggt_diagonal}.
\end{proof}

Aligned with Lemma~\ref{lemma:ggt_lower_bound} and  Lemma~\ref{lemma:ggt_diagonal}, our goal is  to construct a matrix $\mat{G}$ such that its rows are mutually orthogonal and the first two diagonal elements of $\mat{GG}^T$ are equal. An optimally effective strategy involves positioning the points along the circumference of a circle with radius $r^*$ from   (\ref{eq:r*}). Recall the definition of each column vector in $\mathbf{G}$ , $\mathbf{g}_n=\frac{h}{d_n^4}\left(\mathbf{r}_{R, n}-\mathbf{r}_S\right)$, where $d_n^2=\left\|\mathbf{r}_{R, n}-\mathbf{r}_S\right\|^2=r^{*2}+h^2$. Thus, the matrix $\mathbf{G}\in \mathbb{R}^{3 \times N}$ is composed as
$$
\mathbf{G}=\frac{h}{\left(r^{*2}+h^2\right)^2}\left[\begin{array}{c}
r^* \cos \left(\frac{2 \pi}{N} 0\right), \cdots, r^* \cos \left(\frac{2 \pi}{N} n\right) \\
r^* \sin \left(\frac{2 \pi}{N} 0\right), \cdots, r^* \sin \left(\frac{2 \pi}{N} n\right) \\
-h, \cdots,-h
\end{array}\right]  .
$$

The orthogonality of $\mathbf{G}$ first and second rows to the third is affirmed by showing that their respective dot products with the third row sum to zero
$$
\sum_n r^{*2} \cos \left(\frac{2 \pi}{N} n\right)(-h)=\sum_n r^{*2} \sin \left(\frac{2 \pi}{N} n\right)(-h)=0.
$$

Furthermore, the first and second rows are orthogonal by virtue of the product-to-sum formula, as indicated
$$
\begin{aligned}
& \sum_n r^{*2} \cos \left(\frac{2 \pi}{N} n\right) \sin \left(\frac{2 \pi}{N} n\right) \\
& =r^{*2} \frac{1}{2} \sum_n\left(\sin (0)+\sin \left(2 \frac{2 \pi}{N} n\right)\right) \\
& =  r^{*2} \frac{1}{2} \Im\left(\frac{1-e^{\jmath  \frac{4 \pi}{N} N}}{ 1-e^{ \jmath \frac{4 \pi}{N}}}\right)=0
\end{aligned}
$$
where the $\Im(\cdot)$ denotes the imaginary part. 

Additionally, the squared norms of each row align with the trace requirements of $\mathbf{GG}^T$ without the factor $\frac{h}{(r^{*2}+h^2)^2}$, calculated as follows
$$
\left\{\begin{array}{l}
\sum_{n=0}^{N-1} r^{*2} \cos^2 \left(\frac{2 \pi}{N} n\right)=r^{*2} \sum_{n=0}^{N-1} \frac{1+\cos \left(\frac{4 \pi}{N} n\right)}{2}=r^{*2} \frac{N}{2} \\
\sum_{n=0}^{N-1} r^{*2} \sin ^2\left(\frac{2 \pi}{N} n\right)=r^{*2} \sum_{n=0}^{N-1} \frac{1-\cos \left(\frac{4 \pi}{N} n\right)}{2}=r^{*2} \frac{N}{2} \\
\sum_{n=0}^{N-1} h^2=N h^2.
\end{array}\right.
$$

These squared norms ensure that the trace of $\mat{GG}^T$ achieves the lower bound.

\subsection{Proof of Lemma \ref{lemma:ggt_lower_bound}}
\label{ap:ggt_lower_bound}
Leveraging inequality (3.12, 65 in \cite{kay1993fundamentals}) for a positive definite matrix, we have
$$
[(\mat{GG}^T)^{-1}]_{i,i} \geq \frac{1}{[\mat{GG}^T]_{i,i}}
$$
where $[\mathbf{GG}^T]_{i,i}$ represents the $i$-th diagonal element of $\mathbf{GG}^T$. This leads to the inequality
$$
    \operatorname{tr}[(\mathbf{GG}^T)^{-1}] \geq \sum_{i} \frac{1}{[\mathbf{GG}^T]_{i,i}}.
$$

Equality is attained either when $\mathbf{GG}^T$ is a diagonal matrix or when its diagonal elements are its eigenvalues. For our calibration scheme, we opt to make $\mat{GG}^T$ diagonal, as aligning its diagonal elements with the eigenvalues is challenging if it's not already a diagonal matrix.

\subsection{Proof of Lemma \ref{lemma:ggt_diagonal}}
\label{ap:ggt_diagonal}
In the calibration of each LED, which is treated individually, the LED under calibration is assumed to be the origin, denoted as $\vect{r}_S = [0,0,h]^T$. We define standard basic vectors in $\mathbb{R}^3$, $\vect{e}_1 =[1,0,0]^T$, $\vect{e}_2 =[0,1,0]^T$ and $\vect{e}_3 =[0,0,1]^T$. Thus $\mathbf{r}_{R,n} - \mathbf{r}_S = x_n \mathbf{e}_1 + y_n \mathbf{e}_2 - h\mathbf{e}_3$ and express $d_n^2= x_n^2 + y_n^2 + h^2$. Given $\mathbf{g}_n = \frac{h}{d_n^4} (\mathbf{r}_{R, n}-\mathbf{r}_S) \in \mathbb{R}^{3 \times 1}$, we can denote $\mathbf{GG}^T$ as follows
$$
\begin{aligned}
\mathbf{G} \mathbf{G}^T 
&=h^2 \sum_{n=0}^{N-1} \frac{1}{\left(x_n^2+y_n^2+h^2\right)^4} \\
& \times \left(x_n \mathbf{e}_1+y_n \mathbf{e}_2-h \mathbf{e}_3\right)\left(x_n \mathbf{e}_1+y_n \mathbf{e}_2-h \mathbf{e}_3\right)^T.
\end{aligned}
$$

Furthermore, expressing the vectors in polar coordinates yields $\mathbf{r}_{R, n}-\mathbf{r}_S=r_n \cos \left(\phi_n\right) \mathbf{e}_1+r_n \sin \left(\phi_n\right) \mathbf{e}_2-h \mathbf{e}_3$. Here, the notations are implicitly defined as functions of $x_n$ and $y_n$. Consequently, $d_n^2=r_n^2+h^2$. This leads us to express $\mathbf{GG}^T$ in polar coordinates as follows
$$
\begin{aligned}
\mathbf{G} \mathbf{G}^T 
&= h^2 \sum_{n=0}^{N-1} \frac{1}{\left(r_n^2+h^2\right)^4}  \\
&\times \left(r_n \cos \left(\phi_n\right) \mathbf{e}_1+r_n \sin \left(\phi_n\right) \mathbf{e}_2-h \mathbf{e}_3\right)  \\ 
& \times \left(r_n \cos \left(\phi_n\right) \mathbf{e}_1+r_n \sin \left(\phi_n\right) \mathbf{e}_2-h \mathbf{e}_3\right)^T.
\end{aligned}
$$

As established in Lemma~\ref{lemma:ggt_lower_bound} and considering the cost function in   (\ref{eq:optimaldataset}), we have effectively established the lower bound for this objective function. This bound can be expressed as
$$
\begin{aligned}
&\sum_{i=0}^2 \frac{1}{\left[\mathbf{G G}^T\right]_{i i}} =\frac{1}{h^2}\left[\left(\sum_{n=0}^{N-1} \frac{r_n^2 \cos ^2 \phi_n}{\left(r_n^2+h^2\right)^4}\right)^{-1}\right. \\
& +\left(\sum_{n=0}^{N-1} \frac{r_n^2 \sin ^2 \phi_n}{\left(r_n^2+h^2\right)^4}\right)^{-1}  \left.+\left(\sum_{n=0}^{N-1} \frac{h^2}{\left(r_n^2+h^2\right)^4}\right)^{-1}\right].
\end{aligned}
$$

Defining $\eta_n=r_n^2 / h^2$, a positive scalar, we reformulate the task of minimizing the previous expression as
$$
\begin{aligned}
f &= \left(\sum_{n=0}^{N-1} \frac{\eta_n \cos ^2 \phi_n}{\left(\eta_n+1\right)^4}\right)^{-1} +\left(\sum_{n=0}^{N-1} \frac{\eta_n \sin ^2 \phi_n}{\left(\eta_n+1\right)^4}\right)^{-1} \\
&+ \left(\sum_{n=0}^{N-1} \frac{1}{\left(\eta_n+1\right)^4}\right)^{-1}.
\end{aligned}
$$

Considering the relationship $\sin ^2 \phi_n=1-\cos ^2 \phi_n \in[0,1]$, an increase in $\cos ^2 \phi_n$ corresponds to a decrease in $\sin ^2 \phi_n$, and vice versa. With the third term solely dependent on $\eta_n$, and assuming $\eta_n$ are fixed, we define $\alpha_n=\frac{\eta_n}{\left(\eta_n+1\right)^4}$ and $\beta_n=$ $\cos ^2 \phi_n \in[0,1]$. The optimization with respect to $\phi_n$ can then be rewritten as
$$
\min _{\beta_n \in[0,1]} J=\left(\sum_n^{N-1} \alpha_n \beta_n\right)^{-1}+\left(\sum_n^{N-1} \alpha_n\left(1-\beta_n\right)\right)^{-1}.
$$

Taking derivatives and setting them to zero yields
$$
\begin{aligned}
\frac{\mathrm{d} J}{\mathrm{d} \beta_m} & =0 \\
\left(\sum_{n=0}^{N-1} \alpha_n \beta_n\right)^{-2} \alpha_m &= \left(\sum_{n=0}^{N-1} \alpha_n\left(1-\beta_n\right)\right)^{-2} \alpha_m \\
\sum_{n=0}^{N-1} \alpha_n \beta_n & =\sum_{n=0}^{N-1} \alpha_n\left(1-\beta_n\right)
\end{aligned}
$$

where we used the fact that all quantities are positive. Hence this implies that 
$$
\begin{aligned}
\left(\sum_n^{N-1} \frac{\eta_n \cos ^2 \phi_n}{\left(\eta_n+1\right)^4}\right)^{-1} & =\left(\sum_n^{N-1} \frac{\eta_n \sin ^2 \phi_n}{\left(\eta_n+1\right)^4}\right)^{-1} \\
\sum_n^{N-1} \frac{\eta_n \cos ^2 \phi_n}{\left(\eta_n+1\right)^4} & =\sum_n^{N-1} \frac{\eta_n \sin ^2 \phi_n}{\left(\eta_n+1\right)^4} \\
\sum_n^{N-1} \frac{\eta_n \cos ^2 \phi_n}{\left(\eta_n+1\right)^4} & =\frac{1}{2} \sum_n^{N-1} \frac{\eta_n}{\left(\eta_n+1\right)^4} 
\end{aligned}
$$

we observe that the dependence on $\phi_n$ is eliminated. Therefore, our objective is to minimize
$$
f=4\left(\sum_{n=0}^{N-1} \frac{\eta_n}{\left(\eta_n+1\right)^4}\right)^{-1}+\left(\sum_{n=0}^{N-1} \frac{1}{\left(\eta_n+1\right)^4}\right)^{-1}.
$$

To find the optimal values of $\eta_n$, we take the derivative of $f$ with respect to $\eta_m$
$$
\begin{aligned}
\frac{\mathrm{d} f}{\mathrm{d} \eta_m} & =-4\left(\sum_{n=0}^{N-1} \frac{\eta_n}{\left(\eta_n+1\right)^4}\right)^{-2} \frac{\left(\eta_m+1\right)^4-4 \eta_m\left(\eta_m+1\right)^3}{\left(\eta_m+1\right)^8} \\
& -\left(\sum_{n=0}^{N-1} \frac{1}{\left(\eta_n+1\right)^4}\right)^{-2} \frac{-4\left(\eta_m+1\right)^3}{\left(\eta_m+1\right)^8}
\end{aligned}
$$
Setting this derivative to zero yields
$$
\begin{aligned}
\Bigg(\sum_n^{N-1} \frac{1}{(\eta_n+1)^4} \Bigg)^{-2} &=\Bigg(\sum_n^{N-1} \frac{\eta_n}{(\eta_n+1)^4} \Bigg)^{-2}(1-3\eta_m) \\
\left(\frac{\sum_n^{N-1} \frac{\eta_n}{\left(\eta_n+1\right)^4}}{\sum_n^{N-1} \frac{1}{\left(\eta_n+1\right)^4}}\right)^2&=1-3 \eta_m .
\end{aligned}
$$

We observe that the left-hand side of the equation is a positive constant independent of $m$, while the right-hand side represents a linear curve with a negative slope. The intersection of a constant with a linear curve yields only one solution, indicating that $\eta_m$ is at the optimum equal for all $m$ and we write $\eta_m=\eta$. We need to solve
$$
\begin{aligned}
\left(\frac{N \frac{\eta}{(\eta+1)^4}}{N \frac{1}{(\eta+1)^4}}\right)^2 & =-3 \eta+1 \\
0 & =\eta^2+3 \eta-1 \\
\eta & =\frac{-3 \pm \sqrt{13}}{2} .
\end{aligned}
$$

Given the positive nature of $\eta$, the radius is determined as 
\begin{equation}
\label{eq:r*}
    r^*=\sqrt{\frac{-3+\sqrt{13}}{2}} h \approx 0.55 h .
\end{equation}

This leads to the conclusion that
$$
\operatorname{tr}\left[\left(\mathbf{G G}^T\right)^{-1}\right] \geq \frac{1}{N} \frac{\left(h^2+r^{*2}\right)^4}{h^2}\left(\frac{4}{r^{*2}}+\frac{1}{h^2}\right).
$$

To satisfy the bound, a configuration is needed where the product $\mathbf{G G}^T$ is as follows
$$
\mathbf{G G}^T=\frac{ h^2}{\left(h^2+r^{*2}\right)^4} \operatorname{diag}\left[\frac{Nr^{*2}}{2}, \frac{Nr^{*2}}{2}, Nh^2\right].
$$


\bibliographystyle{IEEEtran}
\bibliography{IEEEabrv,mybibfile}

\end{document}